\newtheorem{dfn}{Definition}[section]
\newtheorem{prop}[dfn]{Proposition}
\newtheorem*{prop*}{Proposition}
\newtheorem{thm}[dfn]{Theorem}
\newtheorem{rem*}[dfn]{Remark}
\newcommand{\CNOT}{\mathrm{CNOT}}
\newcommand{\CZladder}{Ladder_\text{CZ}}
\newcommand{\CNOTladder}{Ladder_\CNOT}
\newcommand{\CNOTladderN}[1]{Ladder_{\CNOT,#1}}
\newcommand{\RyRz}{R_y\text{-}R_z}
\newcommand{\Ry}{R_y}
\newcommand{\Rz}{R_z}
\newcommand{\HEA}{\mathrm{HEA}}
\newcommand{\Uent}{U_{\text{ent}}}
\newcommand{\HEARyRzCZ}{\HEA(R_y\text{-}R_z,\CZladder)}
\newcommand{\HEARyCNOT}{\HEA(R_y,\CNOTladder)}
\newcommand{\poly}{\mathrm{poly}}
\begin{document}


\title{On universality of hardware-efficient ansatzes}

\author{Hokuto Iwakiri}
 \email{iwakiri@qunasys.com}
\author{Keita Kanno}%
 \email{kanno@qunasys.com}
\affiliation{%
 QunaSys
}%

\date{\today}

\begin{abstract}
The hardware-efficient ansatz (HEA) is one of the most important class of parametrized quantum circuits for near-term applications of quantum computing. We show that the problem of simulating some major classes of the HEA is BQP-complete by explicitly demonstrating that any relevant quantum circuit can be efficiently represented as an HEA circuit of those classes.
\end{abstract}

\maketitle
\newpage 
\section{Introduction}

The hardware-efficient ansatz (HEA)~\cite{Hardware-Efficient} is one of the most widely used class of parametrized quantum circuits for near-term applications of quantum computing. The circuit consists of layers of parametrized single-qubit gates and constant entangling two-qubit gates, often configured according to the hardware connectivity.

HEA is widely used in near-term quantum algorithms such as variational quantum eigensolver (VQE) \cite{VQE} and quantum machine learning algorithms \cite{1802.06002, 1803.00745}, 
as it takes advantage of the hardware connectivity for achieving maximal representability with limited number of quantum gates.

As it is an essential component of near-term quantum computing, numerous studies have also been devoted to investigate ways to classically simulate HEA circuits \cite{1910.09534, 2002.01935,2005.06787, 2211.03999,2409.01706, mitarai2022quadratic}.
In particular, Ref.~\cite{2409.01706} shows that generic\footnote{i.e., except for some special cases} quantum circuits in a HEA-like family is efficiently simulable using Pauli propagation methods.
It is thus of great interest to rigorously investigate whether the classical simulation of HEA is possible in any cases, and we will prove that the answer is no.

Hardness of classical simulation has been proved for various types of quantum circuits.
An early example is the IQP circuits, consisting only of gates commuting with Hadamard operations, yet shown to yield output distributions hard to sample classically~\cite{doi:10.1098/rspa.2010.0301}. For practically motivated circuit families, similar sampling-hardness result has been established~\cite{farhi2019quantumsupremacyquantumapproximate} for quantum approximate optimization algorithm (QAOA)~\cite{1411.4028}, an algorithm for combinatorial optimization. Moreover, in the context of chemistry-inspired ansatzes, there exist results showing hardness for fermionic systems that incorporate first-excitation operators~\cite{jozsa2008matchgates, brod2011extending, brod2012geometries,brod2016efficient, hebenstreit2019all,beenakker2004charge,arrazola2022universal,divincenzo2005fermionic,oszmaniec2017universal} as well as for ansatzes including higher-order excitations such as the unitary coupled-cluster with singles and doubles (UCCSD) circuits~\cite{1701.02691, 2503.21041} and unitary cluster Jastrow circuits~\cite{doi:10.1021/acs.jctc.9b00963, 2504.12893}.

In this study, we will show, explicitly by construction, that simulating a class of HEA is BQP-complete.
This implies that simulating, not only sampling from but also computing expectation values of, such HEA is classically hard unless BQP $\subset$ P.

Before describing the specific architectures considered in this work, we remark that circuits closely related to the HEA have been shown to be universal even under strong structural constraints.
Raussendorf~\cite{Raussendorf2005TI} demonstrated that a translation-invariant circuit consisting of alternating layers of uniform single-qubit rotations, all applied with the same rotation angle, and nearest-neighbor CZ interactions is universal for quantum computation.
In this work, we extend the scope by considering HEA families that allow independently parameterized Ry and Rz rotations, yet exclude Rx rotations, and we prove that such HEA families are BQP-complete.

More specifically, we consider two types of HEA: (i) one employing Ry and Rz rotations together with a linear CZ ladder, and (ii) one employing only Ry rotations together with a linear CNOT ladder. We will show that simulating both types of HEAs are BQP-complete. Note that the hardness result of the latter HEA immediately implies the hardness of simulating HEAs with Ry and Rz rotations and linear CNOTs. 

To show the BQP-completeness, we first show that all the gates of a universal gate set can be constructed with a HEA with depth that is polynomial in the system size, which implies that simulating HEA is BQP-hard. It is obvious that simulating polynomial-depth HEA is contained in BQP, so we can conclude that simulating such HEA is BQP-complete.

The rest of the paper is structured as follows: in Section~\ref{preliminaries}, we introduce essential ingredients to show the main result of this paper. We formally define HEA, and introduce the notion of universality for circuit families. Section~\ref{Results} describes the main results of the paper and their proofs. In Section~\ref{sec:discussion} we summarize the result and discuss its implications and potential future directions, and finally conclude in Section~\ref{sec:conclusion}. Proofs of some results used in the proof of the main result is given in the appendix.

\section{Preliminaries}\label{preliminaries}
In this section, immediately after reviewing the general form of the hardware-efficient ansatz (HEA)~\cite{Hardware-Efficient}, we introduce the two universality notions that will underpin our proof of HEA universality. In Section \ref{subsec:Definition of HEA}, a general HEA on \(N\) qubits consists of \(M\) alternating layers of single-qubit rotations \(R\) on every wire and a fixed two-qubit entangler \(\Uent\). 
Later in this paper, we will show that, when \(R,\Uent\) are chosen as either a $\RyRz$ and a CNOT ladder or a $\Ry$ and a CZ ladder, the corresponding HEA remains universal.
To make the universality claim more concrete, in Section \ref{subsec:universal_gate_sets}, we first introduce \emph{strict universality} and \emph{computational universality} for quantum gate sets, and then define the notion of universality for quantum circuit families.

\subsection{Definition of HEA}\label{subsec:Definition of HEA}
In this section, we firstly define the hardware-efficient ansatz~\cite{Hardware-Efficient} in a general form, and then introduce some restricted versions of HEA, which we will show to be universal.

\begin{figure}[htbp]
    \centering
    \makebox[\columnwidth][c]{ 
    \begin{quantikz}
    \lstick{$\ket{q_N}$}     & \gate{R} & \gate[wires=5]{U_{\text{ent}}} & \gate{R} & \gate[wires=5]{U_{\text{ent}}} & & \cdots & \gate{R} & \\
    \lstick{$\ket{q_{N-1}}$} & \gate{R} &                                 & \gate{R} &                             & &    \cdots    & \gate{R} & \\
    \lstick{$\vdots$} \setwiretype{n}       & \vdots     &                                 & \vdots     &                             &     &        & \vdots     &      \\
    \lstick{$\ket{q_{2}}$}   & \gate{R} &                                 & \gate{R} &                             & &    \cdots    & \gate{R} & \\
    \lstick{$\ket{q_{1}}$}   & \gate{R} &                                 & \gate{R} &                             & &   \cdots     & \gate{R} & \qw
    \end{quantikz}
    }
    \caption{The general structure of hardware-efficient ansatz.}
    \label{fig:Original_HEA}
\end{figure}
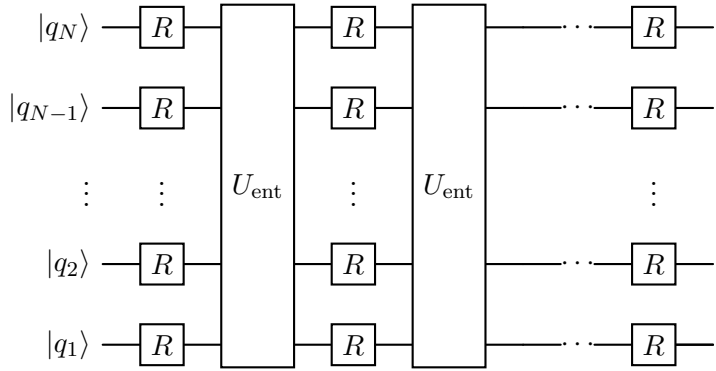

\begin{figure}[tb]
    \makebox[\columnwidth][c]{
\begin{quantikz}[row sep=0.3em, column sep=0.3em]
\lstick{$\ket{q_{N}}$}     & \gate{\Ry} & \gate{\Rz} & \ctrl{1} & & & \gate{\Ry} & \gate{\Rz} & & \ctrl{1} & & & \gate{\Ry} & \gate{\Rz} & & \ctrl{1} & & & \gate{\Ry} & \gate{\Rz} & & \ctrl{1} & & & \cdots \\
\lstick{$\ket{q_{N-1}}$}   & \gate{\Ry} & \gate{\Rz} & \ctrl{0} & \ctrl{1} & & \gate{\Ry} & \gate{\Rz} & & \ctrl{0} & \ctrl{1} & & \gate{\Ry} & \gate{\Rz} & & \ctrl{0} & \ctrl{1} & & \gate{\Ry} & \gate{\Rz} & & \ctrl{0} & \ctrl{1} & & \cdots \\
\lstick{$\vdots$} \setwiretype{n}          &            &            &          & \vdots   &      &            &            &      &          & \vdots   &      &            &            &      & \vdots   &      &      &            &            &      & \vdots   &      &      &        \\
\lstick{$\ket{q_{2}}$}     & \gate{\Ry} & \gate{\Rz} & \ctrl{1} & \ctrl{0} & & \gate{\Ry} & \gate{\Rz} & & \ctrl{1} & \ctrl{0} & & \gate{\Ry} & \gate{\Rz} & & \ctrl{1} & \ctrl{0} & & \gate{\Ry} & \gate{\Rz} & & \ctrl{1} & \ctrl{0} & & \cdots \\
\lstick{$\ket{q_{1}}$}     & \gate{\Ry} & \gate{\Rz} & \ctrl{0} & & & \gate{\Ry} & \gate{\Rz} & & \ctrl{0} & & & \gate{\Ry} & \gate{\Rz} & & \ctrl{0} & & & \gate{\Ry} & \gate{\Rz} & & \ctrl{0} & & & \cdots
\end{quantikz}

    }
    \caption{The Ry-Rz-CZ ansatz.}
    \label{fig:Ry_Rz}
\end{figure}
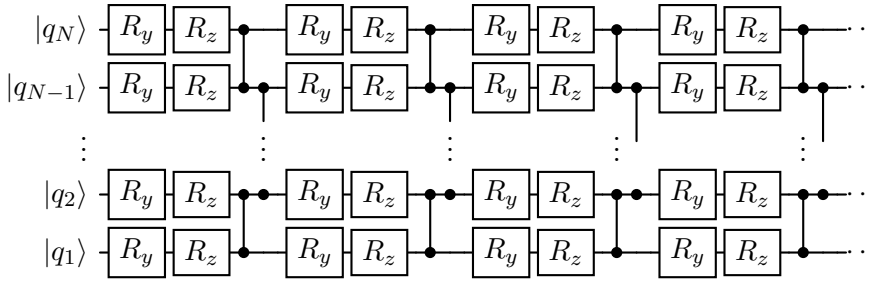

\begin{figure}[tb]
    \centering
    \makebox[\textwidth]{
\begin{quantikz}
\lstick{$\ket{q_{N}}$}     & & \ctrl{1} &      & \\
\lstick{$\ket{q_{N-1}}$}   & & \ctrl{0} & \ctrl{1} & \\
\lstick{$\ket{q_{N-2}}$}   & & \ctrl{1} & \ctrl{0} & \\
\lstick{$\ket{q_{N-3}}$}   & & \ctrl{0} & \ctrl{1} & \\
\lstick{$\vdots$}          \setwiretype{n}&     &          & \vdots   &      \\
\lstick{$\ket{q_{2}}$}     & & \ctrl{1} & \ctrl{0} & \\
\lstick{$\ket{q_{1}}$}     & & \ctrl{0} &      &
\end{quantikz}
    }
    \caption{$N$-qubit ladder of CZ. Since CZs are mutually commutative, unlike the $CNOT$ ladder, the structure does not change whether even or odd.}
    \label{fig:CZ_ladder}
\end{figure}
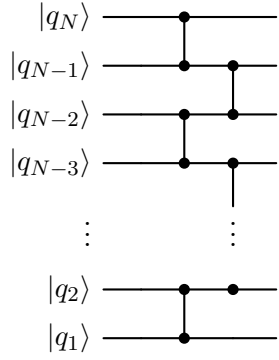

Throughout this paper, we define a general HEA to be a quantum circuit of the following structure (Fig. \ref{fig:Original_HEA})~\cite{Hardware-Efficient}
\begin{equation}
    \HEA(R, \Uent, M)=(\prod_{i=1}^M U_1^{(i)} \Uent)U_1^{(0)}.
\end{equation}
Here, $R$ is some parametrized single-qubit gate,  $\Uent$ is a set of constant two-qubit gates, $M$ is a non-negative integer which we call \emph{depth} of the ansatz, and $U_1^{(i)}$, for each $i$, is a quantum circuit composed of $R$ acting on every qubit. Each parameter for the $R$'s can be different from each other.

In this paper, we will show that two major types of HEA are universal. The first is the \emph{Ry-Rz-CZ ansatz} (Fig. \ref{fig:Ry_Rz}),
\begin{align}
\HEA(\RyRz, \CZladder, M)=(\prod_{i=1}^M (\RyRz)^{(i)} \CZladder)(\RyRz)^{(0)},
\end{align}
where $\RyRz$ denotes a sequence of $\Ry$ and $\Rz$ gates, and $\CZladder$ is a linear ladder of $\text{CZ}$ gate shown in Fig.~\ref{fig:CZ_ladder}. The other is the \emph{Ry-CNOT ansatz} (Fig.~\ref{fig:Ry}),
\begin{align}
\HEA(\Ry, \CNOTladder, M)=(\prod_{i=1}^M (\Ry)^{(i)} \CNOTladder)(\Ry)^{(0)},
\end{align}
where $\Ry$ is a sequence of $\Ry$ gate, and $\CNOTladder$ is a linear ladder of $\CNOT$ gate shown in Fig.~\ref{fig:CNOT_ladder}.

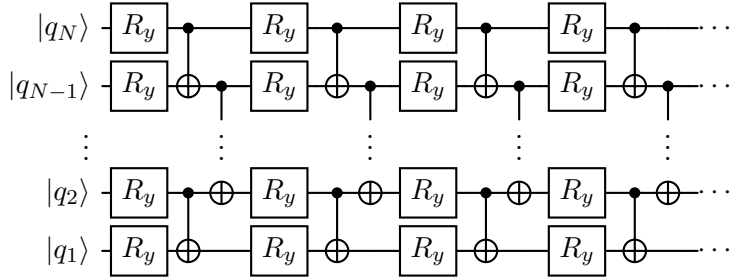
\begin{figure}[tb]
    \makebox[\columnwidth][c]{
\begin{quantikz}[row sep=0.3em, column sep=0.3em]
\lstick{$\ket{q_{N}}$}     & \gate{\Ry} & \ctrl{1} &      & & \gate{\Ry} & & \ctrl{1} &      & & \gate{\Ry} & & \ctrl{1} &      & & \gate{\Ry} & & \ctrl{1} &      & & \cdots \\
\lstick{$\ket{q_{N-1}}$}   & \gate{\Ry} & \targ{}  & \ctrl{1} & & \gate{\Ry} & & \targ{}  & \ctrl{1} & & \gate{\Ry} & & \targ{}  & \ctrl{1} & & \gate{\Ry} & & \targ{}  & \ctrl{1} & & \cdots \\
\lstick{$\vdots$} \setwiretype{n}          &            &          & \vdots   &     &            &     &          & \vdots   &     &            &     &          & \vdots   &     &            &     &          & \vdots   &     &        \\
\lstick{$\ket{q_{2}}$}     & \gate{\Ry} & \ctrl{1} & \targ{}  & & \gate{\Ry} & & \ctrl{1} & \targ{}  & & \gate{\Ry} & & \ctrl{1} & \targ{}  & & \gate{\Ry} & & \ctrl{1} & \targ{}  & & \cdots \\
\lstick{$\ket{q_{1}}$}     & \gate{\Ry} & \targ{}  &      & & \gate{\Ry} & & \targ{}  &      & & \gate{\Ry} & & \targ{}  &      & & \gate{\Ry} & & \targ{}  &      & & \cdots
\end{quantikz}

    }
    \caption{The Ry-CNOT ansatz.}
    \label{fig:Ry}
\end{figure}

\begin{figure}[tb]
    \centering
    \makebox[\textwidth]{
    \begin{quantikz}
    \lstick{$\ket{q_{2i}}$}     & & \ctrl{1} &      & \\
    \lstick{$\ket{q_{2i-1}}$}   & & \targ{}  & \ctrl{1} & \\
    \lstick{$\ket{q_{2i-2}}$}   & & \ctrl{1} & \targ{}  & \\
    \lstick{$\ket{q_{2i-3}}$}   & & \targ{}  & \ctrl{1} & \\
    \lstick{$\vdots$}           \setwiretype{n}&     &          & \vdots   &      \\
    \lstick{$\ket{q_{2}}$}      & & \ctrl{1} & \targ{}  & \\
    \lstick{$\ket{q_{1}}$}      & & \targ{}  &      & \qw
    \end{quantikz}
    \quad
    \quad
    \quad
    \quad
\begin{quantikz}
\lstick{$\ket{q_{2i+1}}$}   & & \ctrl{1} &      & \\
\lstick{$\ket{q_{2i}}$}     & & \targ{}  & \ctrl{1} & \\
\lstick{$\ket{q_{2i-1}}$}   & & \ctrl{1} & \targ{}  & \\
\lstick{$\vdots$}      \setwiretype{n}     &     & \vdots   &          &      \\
\lstick{$\ket{q_{2}}$}      & & \targ{}  & \ctrl{1} & \\
\lstick{$\ket{q_{1}}$}      & &      & \targ{}  & \qw
\end{quantikz}
    }
    \caption{Left side is a $(2i)$-qubit ladder of $CNOT$ and Right side is a $(2i+1)$-qubit ladder of CNOT.}
    \label{fig:CNOT_ladder}
\end{figure}
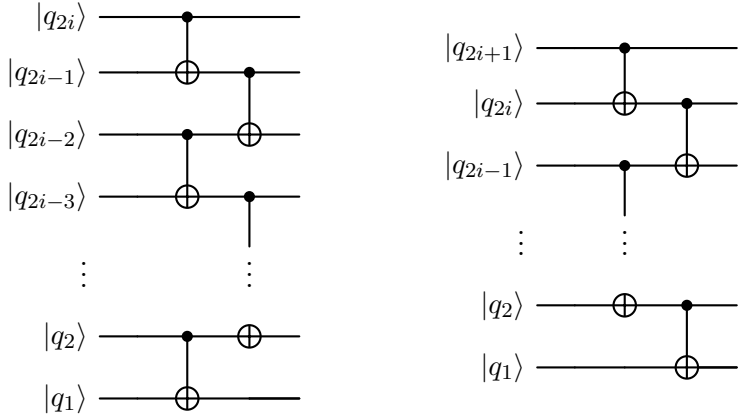

\subsection{Strict and computational universality}
\label{subsec:universal_gate_sets}
Now let us define some notion of universalities for families of quantum circuits, such as types of HEA. To this end, we first review a standard notion of universality for gate sets.

A gate set $\mathcal G$ is a finite collection of quantum gates.  An $n$-qubit circuit $C_n$ over $\mathcal G$ is any finite sequence of these gates; its size $|C_n|$ is the total gate count.
In what follows, we introduce two distinct notions of universality for a gate set $\mathcal G$.

\subsubsection{Strict universality}
A gate set \(\mathcal G\) is \emph{strictly universal} if there exists an integer \(n_{0}\) such that for every \(n\ge n_{0}\) the subgroup generated by \(\mathcal G\) is dense in \(SU(2^n)\)~\cite{aharonov2003}.  Equivalently, for any \(n\)-qubit circuit \(C_n\) over a strictly universal basis and any tolerance \(\varepsilon>0\), there is another \(n\)-qubit circuit \(C'_n\) using only gates from \(\mathcal G\) such that
\begin{align}
\bigl\|\,C_n - C'_n\bigr\|\;\le\;\varepsilon,
\end{align}
where  $\bigl\|\ \cdot \bigr\|$ is the standard operator norm
induced by the $l_2$ norm.  Although the definition does not itself control the growth of \(\lvert C'_n \rvert\) as \(\varepsilon\to0\), the Solovay–Kitaev theorem ensures that for each fixed \(n\ge n_{0}\) one can choose
\begin{align}
\lvert C'_n\rvert \;=\; O(\mathrm{polylog}(1/\varepsilon)),
\end{align}
and that such a circuit can be found by a classical algorithm in time $O(\mathrm{polylog}(1/\varepsilon))$~\cite{dawson_nielsen_2005}.

An example of strictly universal gate set is the Clifford–\(T\) set: Hadamard gate $H$, T-gate $T$, Controlled-Not $\mathrm{CNOT}$ since \(H,T\) densely generate \(SU(2)\) and CNOT provides the requisite entangling capability~\cite{nielsen2010}.

\subsubsection{Computational universality}\label{sec:computational_universality_gate_set}
A gate set \(\mathcal G\) is \emph{computationally universal} if, for any uniform family of polynomial-size circuits \(\{C_n\}\) whose gates are drawn from a strictly universal set, one can construct replacement circuits \(\{C_n'(\varepsilon)\}\) using only gates from \(\mathcal G\) such that the entire output probability distribution of each \(C_n\) is reproduced to within total-variation error \(\varepsilon\) for an error tolerance \(\varepsilon>0\), and moreover
\[
\lvert C_n'(\varepsilon)\rvert \;=\;\mathrm{poly}\!\bigl(\lvert C_n\rvert,\log\tfrac1\varepsilon\bigr).
\]
In this sense, \(\mathcal G\) can emulate the sampling behavior of every polynomial-size quantum algorithm originally built from a strictly universal basis, incurring only polynomial-factor overhead in circuit size and accuracy~\cite{aharonov2003,nielsen2010}. 

Shi~\cite{R_CNOT_universality} proved that
$\{R,\mathrm{CNOT}\}$ is \emph{computationally universal} where $R$ is any \emph{real} single-qubit
gate whose square is not diagonal in the computational basis. 
Product of $R$ rotations densely generate $SO(2)$, and the entangling gate propagates this real subgroup to $SO(2^{n})$ for all $n$.
Because only real matrices are produced, the generated group is \emph{not} dense in $SU(2^{n})$, so strict universality fails, but Shi’s poly-time construction ensures computational universality.

\subsubsection{Universality of families of quantum circuits}
In order to relate these notions of universality to HEA, we introduce the following notion of universality for families of quantum circuits. Let us consider a family of quantum circuits $\mathcal{C}$ with an integer parameter called \emph{depth}, which counts the number of blocks in each circuit, where each block constitutes of at most $\poly(N)$ elementary gates of some universal gate set.  Such family of quantum circuits $\mathcal{C}$ is \textit{strictly} (resp. \textit{computationally}) \textit{universal} if and only if, for some gate set $\mathcal{G}$ that is strictly (resp. computationally) universal, the following holds.
For any $N$-qubit quantum circuit, which we will call a \emph{target circuit} throughout the rest of the paper, that is made of gates in $\mathcal{G}$ that are polynomially many in $N$, there is a corresponding quantum circuit in $\mathcal{C}$ with $N+O(1)$ qubits and $\poly(N)$ depth which is exactly equivalent as a unitary operation (up to an addition of trivial ancilla qubits and global phases).
We will show that the Ry-Rz-CZ ansatz and the Ry-CNOT ansatz are universal in this sense.

\section{Results}\label{Results}
In this section we prove that the Ry-Rz-CZ ansatz is strictly universal and the Ry-CNOT ansatz computationally universal. To this end, we show that the following two claims hold: For two HEA circuits $C_1$ and $C_2$ of same type with depth $M_1$ and $M_2$, respectively, a concatenated circuit $C_1+C_2$ is equivalent to a HEA circuit with depth $M_1+M_2+M'$, where $M'$ is an integer which is constant independent of the number of qubits $N$. Then we show that for every gate, embedded in an $N$-qubit circuit, in a gate set that is strictly (resp. computationally) universal, there is an $N+O(1)$-qubit HEA representation of the gate, where the HEA circuit has depth at most $\poly(N)$. These two facts mean that one can construct an HEA circuit with depth at most $\poly(N)$ for any target circuit that is generated by polynomially many gates that is in the gate set, concluding our universality proof.

\begin{thm}\label{thm:cz_universal}
  The circuit family $\HEARyRzCZ$ is strictly universal.
\end{thm}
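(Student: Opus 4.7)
The plan is to verify the family-universality criterion of Section~\ref{subsec:universal_gate_sets} by targeting the strictly universal Clifford--$T$ basis $\{H, T, \CNOT\}$: I would show that each such gate, when embedded in an arbitrary $N$-qubit target circuit, admits an exact realization as a $\HEARyRzCZ$ circuit on $N + O(1)$ qubits with depth $\poly(N)$, and then glue $\poly(N)$ such realizations into a single HEA of $\poly(N)$ total depth via a concatenation lemma. This is precisely the two-step plan outlined immediately before Theorem~\ref{thm:cz_universal}.

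The first ingredient is the concatenation lemma. For two $\HEARyRzCZ$ circuits $C_1, C_2$ of depths $M_1, M_2$ on the same qubit register, their concatenation presents two adjacent $\RyRz$ layers at the junction between them---a product of four single-qubit rotations per wire rather than the two rotations per wire that a single HEA layer accommodates. I would exploit the fact that $\CZladder^2 = I$ (it is a product of commuting self-inverse $\mathrm{CZ}$ gates) to insert
\begin{equation*}
I \;=\; \CZladder \,\cdot\, (\RyRz)_{0} \,\cdot\, \CZladder
\end{equation*}
between the two junction layers, where $(\RyRz)_{0}$ denotes the all-zero-parameter, and hence identity, $\RyRz$ layer. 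This recasts the concatenation into standard HEA form with depth $M_1 + M_2 + 2$, establishing the lemma with a constant overhead $M'=2$.

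The second ingredient is the realization of each gate of $\{H, T, \CNOT\}$ as a $\poly(N)$-depth $\HEARyRzCZ$ circuit on $N + O(1)$ qubits. The single-qubit gates are immediate: $H = i R_y(\pi/2) R_z(\pi)$ and $T = R_z(\pi/4)$ (both up to global phase), each realized by a depth-$0$ HEA, i.e.\ a single $\RyRz$ layer with the chosen angles on the target qubit and zero angles elsewhere. For $\CNOT_{c,t}$ I would use the identity $\CNOT_{c,t} = H_t\, \mathrm{CZ}_{c,t}\, H_t$ and thereby reduce the task to realizing $\mathrm{CZ}_{c,t} \otimes I^{\otimes (N-2)}$ exactly. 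When $c, t$ are not adjacent, I would transport them into adjacent positions using a chain of nearest-neighbor SWAPs---each SWAP itself being a constant-depth HEA on two adjacent qubits via $\mathrm{SWAP} = \CNOT_{12}\CNOT_{21}\CNOT_{12}$ once the adjacent-CZ subroutine is available---then apply the adjacent-CZ realization and reverse the SWAPs, costing $O(N)$ additional depth.

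The main obstacle is the ``isolated adjacent CZ'' subroutine, namely constructing a $\poly(N)$-depth HEA whose unitary is exactly $\mathrm{CZ}_{i,i+1} \otimes I^{\otimes (N-2)}$, even though a single $\CZladder$ primitive couples every neighboring pair simultaneously. My approach would be to (i) place the $O(1)$ ancilla qubits in $|0\rangle$ at the ends of the chain so that the boundary $\mathrm{CZ}$s of the ladder act as identity (a $\mathrm{CZ}$ with one qubit in $|0\rangle$ is trivial), shortening the effective ladder; and (ii) systematically exploit conjugation identities such as $\CZladder \cdot X_j \cdot \CZladder = Z_{j-1} X_j Z_{j+1}$, together with carefully tuned $\RyRz$ layers, to cancel the unwanted ladder edges while retaining a single $\mathrm{CZ}_{i,i+1}$. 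Combining this subroutine with the SWAP transport and the concatenation lemma then yields, for any $\poly(N)$-size Clifford--$T$ target circuit, a $\poly(N)$-depth $\HEARyRzCZ$ realization, establishing strict universality.
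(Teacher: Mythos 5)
Your overall skeleton matches the paper's: a concatenation lemma with constant ($M'=2$) overhead obtained by inserting an identity $\CZladder\cdot(\RyRz)_0\cdot\CZladder$ at the junction, depth-$0$ realizations of $H$ and $T$, and reduction of $\CNOT_{k,l}$ to a nearest-neighbor $\mathrm{CZ}$ plus SWAP transport. All of that is sound (modulo a bookkeeping slip: each SWAP is built from nearest-neighbor CNOTs that each cost $O(N)$ HEA depth, so the transport costs $O(N^2)$ depth, not $O(N)$ --- still polynomial, so harmless).

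The genuine gap is at the step you yourself flag as the main obstacle: exactly realizing $\mathrm{CZ}_{i,i+1}\otimes I^{\otimes(N-2)}$ when the only entangler is the full ladder. Neither of your two proposed mechanisms delivers this. (i) Ancillas prepared in $\ket{0}$ at the ends of the chain only trivialize the boundary $\mathrm{CZ}$s that touch those ancillas; they do nothing to remove the \emph{interior} edges of the ladder, which is the entire difficulty, and in any case this is a state-dependent argument rather than the exact unitary identity the universality definition demands (the paper's Ry-Rz-CZ construction uses no ancillas at all). (ii) The conjugation identity $\CZladder\, X_j\, \CZladder = Z_{j-1}X_jZ_{j+1}$ describes how the ladder propagates Paulis but gives no route to an isolated two-qubit $\mathrm{CZ}$; ``carefully tuned $\RyRz$ layers'' is a statement of intent, not a construction. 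The idea that does work --- and that your sketch gestures at without realizing --- is that sandwiching a single-qubit gate on qubit $j$ between two copies of $\CZladder$ cancels every CZ edge not incident to $j$, since those edges commute past the gate and square to the identity. The paper exploits this to build a recursively defined family $D_k$ (starting from $D_1 = H_2\, S^\dagger_2 H_1\, \CZladder\, H_1\, S_1 H_1\, \CZladder\, H_1 S^\dagger_1$, each $D_k$ costing depth $16k$) and then obtains $\mathrm{CZ}_{2i,2i+1}=D_{2i+1}D_{2i}^\dagger H_{2i}$ and $\mathrm{CZ}_{2i-1,2i}=H_{2i}D_{2i-1}^\dagger D_{2i}$. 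Without this (or an equivalent explicit cancellation scheme), your proof of the CNOT step, and hence of the theorem, is incomplete.
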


\begin{thm}\label{thm:cnot_universal}
  The circuit family $\HEARyCNOT$ is computationally universal.
\end{thm}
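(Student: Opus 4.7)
The plan is to follow the two-part strategy laid out earlier in the paper: first prove a concatenation lemma for $\HEARyCNOT$, and then show that every gate of a computationally universal gate set admits an exact $\HEARyCNOT$ representation of $\poly(N)$ depth on $N+O(1)$ qubits. By Shi's theorem~\cite{R_CNOT_universality}, the set $\mathcal{G}=\{\Ry(\theta),\CNOT\}$ is computationally universal, and it is the natural target here because its generators are real, matching the real unitaries accessible to $\HEARyCNOT$ (which cannot be strictly universal).

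The concatenation lemma is an algebraic identity: stacking two Ry-CNOT HEA circuits of depths $M_1$ and $M_2$ produces an interface of two consecutive $\Ry$-layers, which merge into one via $\Ry(\alpha)\Ry(\beta)=\Ry(\alpha+\beta)$. The result is an $\HEARyCNOT$ of depth at most $M_1+M_2+O(1)$. This reduces the theorem to encoding each individual gate in $\mathcal{G}$.

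For a single $\Ry(\theta)_i$, the plan is to construct a short HEA whose intermediate $\CNOTladder$'s cancel (with the aid of appropriately chosen $\Ry(\pi)$ rotations, which act as Pauli $Y$'s up to phase and can selectively flip the propagation pattern of the ladder), leaving only the desired rotation on qubit $i$; boundary ancillas fixed in $\ket{0}$ help by truncating the ladder structure near the edges where the CNOT pattern is most awkward. For a single $\CNOT(i,j)$, I would proceed in three sub-steps: (i) isolate a nearest-neighbor CNOT (of the direction already present in $\CNOTladder$) by applying the ladder twice and using intermediate $\Ry(\pi)$ rotations together with ancillas to cancel all the unwanted ladder CNOTs; (ii) reverse the direction of a CNOT via the explicit identity $(\Ry(\pi/2)_a\otimes \Ry(-\pi/2)_b)\,\CNOT(a,b)\,(\Ry(-\pi/2)_a\otimes \Ry(\pi/2)_b)=\CNOT(b,a)$, which can be verified directly by noting that $\Ry(\pm\pi/2)$ exchanges $X$ and $Z$ up to sign; (iii) combine three nearest-neighbor CNOTs (with one direction reversal) into a nearest-neighbor SWAP, and chain $O(N)$ of these SWAPs to move qubit $i$ next to qubit $j$, apply the nearest-neighbor CNOT there, and undo the routing, giving an $O(N)$-depth encoding of $\CNOT(i,j)$.

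With these encodings and the concatenation lemma, any $\poly(N)$-gate target circuit over $\mathcal{G}$ compiles into an $\HEARyCNOT$ circuit of $\poly(N)$ depth, establishing the theorem. The main obstacle is step (i) above: exactly isolating a single nearest-neighbor CNOT out of the fixed brickwall ladder using only single-qubit $\Ry$'s and ladder applications. I expect this to require a careful symplectic bookkeeping of how $\CNOTladder$ propagates Pauli operators, so that pairs of unwanted CNOTs in two consecutive ladders can be made to coincide and cancel, combined with strategic placement of ancilla qubits at the register boundary to simplify the pattern of CNOTs that need to be cancelled.
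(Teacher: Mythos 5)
Your overall skeleton (concatenation lemma, Shi's gate set $\{\Ry,\CNOT\}$, per-gate encoding, SWAP routing) matches the paper, and your concatenation lemma is correct (the paper even gets exactly $M_1+M_2$ with no constant overhead, since consecutive $\Ry$ layers merge). But there is a genuine gap at precisely the step you flag as the ``main obstacle,'' and the mechanism you propose for it cannot work. You hope to isolate a nearest-neighbor $\CNOT$ by applying the ladder twice with intermediate $\Ry(\pi)$ (Pauli $Y$) layers so that ``pairs of unwanted CNOTs in two consecutive ladders coincide and cancel.'' Since $\CNOT$ normalizes the Pauli group and $P$ is Pauli, one has $\CNOTladder\cdot P\cdot\CNOTladder=\CNOTladder^{2}\cdot P'$ for some Pauli $P'$; so inserting Paulis between two ladders buys you nothing beyond $\CNOTladder^{2}$ decorated by single-qubit Paulis. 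And $\CNOTladder^{2}\neq I$: the brickwall structure means the trailing sub-layer of the first ladder and the leading sub-layer of the second act on \emph{different} qubit pairs, so nothing cancels pairwise. In fact the paper proves (Proposition~\ref{prop:order-of-CNOT-ladder}) that $\CNOTladderN{N}$ has order exactly $2^{\lceil\log_{2}N\rceil}$, which grows with $N$ — so no bounded number of ladders interleaved with Paulis can collapse to a single two-qubit gate this way. Ancillas at the boundary do not change this, since the obstruction lives in the bulk of the brickwork.

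The paper's actual route supplies the idea you are missing: because the ladder has finite order, $\CNOTladder^{\dagger}=\CNOTladder^{\,2^{\lceil\log_{2}N\rceil}-1}$ is itself realizable as a zero-angle Ry-CNOT block of depth $2^{\lceil\log_{2}N\rceil}-1$. With both $\CNOTladder$ and $\CNOTladder^{\dagger}$ in hand, the paper exhibits an \emph{explicit} $6$-qubit circuit (16 ladders, 16 inverse ladders, and tabulated $\Ry$ angles) equal to $\CNOT_{2,1}$, and embeds it into $N+4$ qubits using four ancillas; direction reversal is then done with $HX=\Ry(-\pi/2)$ and $XH=\Ry(\pi/2)$. (As a side note, your reversal identity in step (ii) has the signs of the $\pm\pi/2$ swapped — conjugation by $\Ry(-\pi/2)_a\otimes \Ry(\pi/2)_b$ sends $\exp\bigl(i\tfrac{\pi}{4}(I-Z_a)(I-X_b)\bigr)$ to $\exp\bigl(i\tfrac{\pi}{4}(I-X_a)(I-Z_b)\bigr)$ — but that is cosmetic. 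Likewise, a single $\Ry(\theta)_i$ needs no ladder cancellation at all: a depth-$0$ layer with all other angles zero already realizes it.) To complete your proof you would need to either prove the finite-order proposition and produce an explicit nearest-neighbor $\CNOT$ synthesis, or find a genuinely different way to invert or neutralize the ladder; as written, step (i) is not merely unfinished but based on an identity that fails.
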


\subsection{Decomposition into a product of unitary layers}\label{sec:decomposition}

We now show that HEA circuits of our interest can be merged into one HEA circuit with at most constant-depth overhead. Namely, a sequence of two Ry-Rz-CZ circuits of depth $M_1$ and $M_2$ is equivalent to a Ry-Rz-CZ circuit of depth $M_1+M_2+2$, and a sequence of two Ry-CNOT circuits of depth $M_1$ and $M_2$ is equivalent to a Ry-CNOT circuit of depth $M_1+M_2$, i.e., no additional depth is required.

\subsubsection{$\HEA(R_y\text{-}R_z,\CZladder)$}\label{sec:cz_decomp}
Let us first consider the Ry-Rz-CZ ansatz.
Consider to merge two Ry-Rz-CZ circuits, $C_1$ and $C_2$ of depth $M_1$ and $M_2$, respectively  into one Ry-Rz-CZ circuit $C$ (Fig.~\ref{fig:concatenation-ry-rz-cz}). If one naively concatenate the two circuits, then there are sequences of Ry-Rz-Ry-Rz at the concatenation point, and is not a valid Ry-Rz-CZ circuit. However, one can consider depth-$M_1+M_2+2$ Ry-Rz-CZ circuit, with the rotation angles of the middle rotation gates set to be zero. Now one can see that a concatenation can be done with two additional depth.
It is thus in general true that a concatenation of $M_1, M_2, \dots, M_d$-depth Ry-Rz-CZ circuits can be represented as one Ry-Rz-CZ circuit of depth $M$,
\begin{equation}\label{eq:depth-RyRz}
    M=2d-2+\sum_{j=1}^d M_j.
\end{equation}
\begin{figure}
    \centering
    \begin{tikzpicture}
		\node[scale=0.7] at (0,0) {
    \begin{quantikz}[row sep={25pt,between origins}, column sep=10pt]
         \ldots \ & \ctrl{1} &       & \gate{R_y} & \gate{R_z} &\\
          \ldots \  & \control{} & \ctrl{1} & \gate{R_y} & \gate{R_z}&\\
         \ldots \ &           & \control{} & \gate{R_y} & \gate{R_z}&
    \end{quantikz}
    +
    \begin{quantikz}[row sep={25pt,between origins}, column sep=10pt]
         & \gate{R_y} & \gate{R_z} & \ctrl{1} &       & \ \ldots   \\
         & \gate{R_y} & \gate{R_z} & \control{} & \ctrl{1} & \ \ldots \\
         & \gate{R_y} & \gate{R_z} &           & \control{} & \ \ldots 
    \end{quantikz}
    =
    \begin{quantikz}[row sep={25pt,between origins}, column sep=10pt]
         \ldots \ & \ctrl{1}	&			& \gate{R_y} &\gate{R_z}	& \ctrl{1}	&			& \ctrl{1}	&			&\gate{R_y} &\gate{R_z} & \ctrl{1}	&			&  \ \ldots  \\
          \ldots \ & \control{}	& \ctrl{1}	& \gate{R_y} &\gate{R_z}	& \control{}	& \ctrl{1}	& \control{}	& \ctrl{1}	&\gate{R_y} &\gate{R_z} & \control{}	& \ctrl{1}	& \ \ldots  \\
           \ldots \ & 			& \control{}	& \gate{R_y} &\gate{R_z}	 & 			& \control{}	& 			& \control{}	&\gate{R_y} &\gate{R_z} &			& \control{}	& \ \ldots 
    \end{quantikz}
    };
    \end{tikzpicture}
    \caption{Concatenation of Ry-Rz-CZ circuits}
    \label{fig:concatenation-ry-rz-cz}
\end{figure}
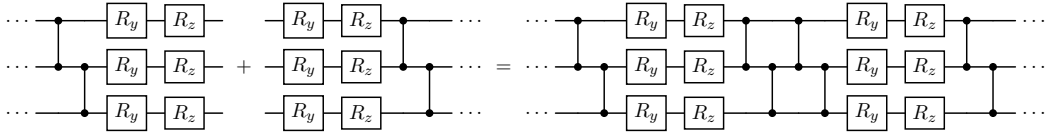

\subsubsection{$\HEA(R_y,\CNOTladder)$}\label{sec:cnot_decomp}
We now consider the Ry-CNOT ansatz. The same logic as in the previous case does not apply, as $\CNOTladder^2$ does not equal to identity. But actually the concatenation is easier, and can be done without any additional depth in this case. Consider concatenating two Ry-CNOT ansatz of depth $M_1$ and $M_2$ (Fig.~\ref{fig:concatenation-ry-cnot}). There are sequence of two Ry gates at the concatenation point, but as $R_y(\alpha) R_y(\beta)=R_y(\alpha+\beta)$, the concatenated circuit is of the form of depth-$(M_1+M_2)$ Ry-CNOT ansatz. A concatenation of $M_1,M_2,\dots,M_d$-depth Ry-CNOT circuits can be represented as one Ry-CNOT circuit of depth $M$,
\begin{equation}\label{eq:depth-Ry}
    M=\sum_{j=1}^d M_j.    
\end{equation}
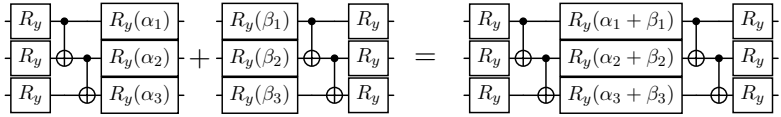
\begin{figure}
    \centering
    	\begin{tikzpicture}
		\node[scale=0.7] at (0,0) {
			\begin{quantikz}[row sep={20pt,between origins}, column sep=3pt]
				 & \gate{R_y}  & \ctrl{1}	&			& \gate{R_y(\alpha_1)} & \\
				 & \gate{R_y}  & \targ{}	& \ctrl{1}	& \gate{R_y(\alpha_2)} &\\
				 & \gate{R_y}  & 			& \targ{}	& \gate{R_y(\alpha_3)} &
			\end{quantikz}
		};
		\node[] at (1.4,0) {$+$};
		\node[scale=0.7] at (2.8,0) {
			\begin{quantikz}[row sep={20pt,between origins}, column sep=3pt]
				 & \gate{R_y(\beta_1)}  & \ctrl{1}	&			& \gate{R_y}& \\
				 & \gate{R_y(\beta_2)}  & \targ{}	& \ctrl{1}	& \gate{R_y}& \\
				 & \gate{R_y(\beta_3)}  & 			& \targ{}	& \gate{R_y}& 
			\end{quantikz}
		};
		\node[] at (4.4,0) {$=$};
		\node[scale=0.7] at (7,0) {
			\begin{quantikz}[row sep={20pt,between origins}, column sep=3pt]
				& \gate{R_y}  & \ctrl{1}	&			& \gate{R_y(\alpha_1+\beta_1)}  & \ctrl{1}	&			& \gate{R_y} &\\
				& \gate{R_y}  & \targ{}		& \ctrl{1}	& \gate{R_y(\alpha_2+\beta_2)}  & \targ{}	& \ctrl{1}	& \gate{R_y}& \\
				& \gate{R_y}  & 			& \targ{}	& \gate{R_y(\alpha_3+\beta_3)}  & 			& \targ{}	& \gate{R_y} &
			\end{quantikz}
            };
	\end{tikzpicture}
    \caption{Concatenation of Ry-CNOT circuits}
    \label{fig:concatenation-ry-cnot}
\end{figure}

\subsection{Universality}\label{sec:universality}
Section~\ref{sec:decomposition} established that a sequence of HEA circuits can be represented as one HEA circuit with constant depth overhead. The remaining task is to show that any gate in a universal gate set can be represented by a HEA circuit with depth at most polynomial in $N$, where $N$ is the number of qubits of the target circuit.

In this section, we first deal with the Ry-Rz-CZ ansatz.
The key idea is to define a family of circuits $D_k$ and its hermitian conjugate, which turns out to be realized by a poly-depth Ry-Rz-CZ circuit, and show that a combination of $D_k$, $D_k^\dagger$ and $H$ gates are sufficient to realize any nearest-neighbor $CZ$ gates. In combination with the $H$ and $T$ gates that can be realized by a 0-depth Ry-Rz-CZ circuit, each gate in a universal gate set $\{H,T,\CNOT\}$ is shown to be realized efficiently by Ry-Rz-CZ circuits.

Secondly we will prove the universality of Ry-CNOT ansatz. There, we will show that $\CNOTladder^\dagger$ can be efficiently realized by a Ry-CNOT circuit, and combining $Ry, \CNOTladder$ and $\CNOTladder^\dagger$, we will show that nearest-neighbor CNOT gates can be realized by a poly-depth Ry-CNOT circuit, along with an arbitrary Ry gate, which can be realized in 0-depth, completing a computationally universal gate set $\{\Ry,\CNOT\}$.

\subsubsection{$\HEA(R_y\text{-}R_z,\CZladder)$}\label{sec:cz_universality}
In this subsection, we prove each gate in a strictly universal gate set $\{H, T, \mathrm{CZ}\}$ can be realized as a Ry-Rz-CZ circuit of depth polynomial in $N$, where $N$ is the maximum qubit index that the gate acts on.

\paragraph{Single-qubit gates $H$ and $T$}\label{para:cz_one_qubit}\mbox{}\\[0.1em]
The single-qubit gates $H$ and $T$ can be synthesized entirely within the depth-0 the Ry-Rz-CZ circuit, 
which contains a sequence of $R_y$ and $R_z$ rotations at each qubit. The following holds as unitary operations,
\begin{align}
H&= - R_z(\pi)R_y \left(-\frac{\pi}{2}\right), &
T&= e^{i\pi/8} R_z \left(\frac{\pi}{4}\right)R_y(0).
\end{align}

Up to global phases, these expressions show that the $H$ and $T$ gates are equivalent to a sequence of $R_y$ and $R_z$ gates. It is thus possible to realize the action of $H$ or $T$ gates on $k$-th qubit as a 0-depth Ry-Rz-CZ circuit, by setting the angle parameters of $R_y$ and $R_z$ gates at the $k$-th qubit to $-\pi/2$ and $\pi$ for the $H$ gate, and $0$ and $\pi/4$ for the $T$ gate, respectively, and all other parameters to zero.

\paragraph{Two-qubit gate $\CNOT$}\mbox{}\\[0.1em]
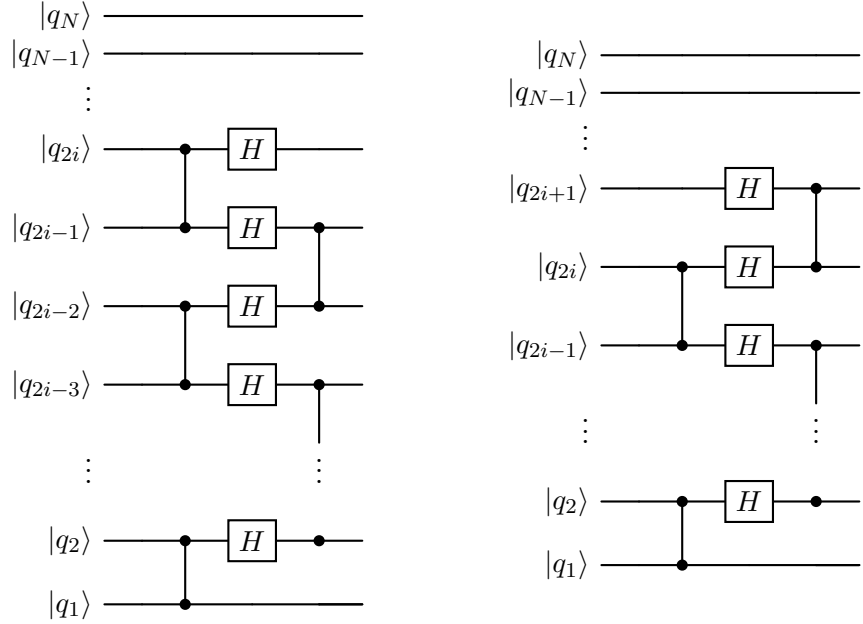
\begin{figure}[tb]
    \makebox[\textwidth]{
\begin{quantikz}
\lstick{$\ket{q_{N}}$}       & &      &      &       & \\
\lstick{$\ket{q_{N-1}}$}     & &      &      &       & \\
\lstick{$\vdots$}   \setwiretype{n}          &     &          &          &           &     \\
\lstick{$\ket{q_{2i}}$}      & & \ctrl{1} & \gate{H} &       & \\
\lstick{$\ket{q_{2i-1}}$}    & & \ctrl{0} & \gate{H} & \ctrl{1}  & \\
\lstick{$\ket{q_{2i-2}}$}    & & \ctrl{1} & \gate{H} & \ctrl{0}  & \\
\lstick{$\ket{q_{2i-3}}$}    & & \ctrl{0} & \gate{H} & \ctrl{1}  & \\
\lstick{$\vdots$} \setwiretype{n}            &     &          &          & \vdots    &     \\
\lstick{$\ket{q_{2}}$}       & & \ctrl{1} & \gate{H} & \ctrl{0}  & \\
\lstick{$\ket{q_{1}}$}       & & \ctrl{0} &      &       & \qw
\end{quantikz}

    \qquad\qquad
\begin{quantikz}
\lstick{$\ket{q_{N}}$}       & &      &      &      & \\
\lstick{$\ket{q_{N-1}}$}     & &      &      &      & \\
\lstick{$\vdots$}   \setwiretype{n}          &     &          &          &          &     \\
\lstick{$\ket{q_{2i+1}}$}    & &      & \gate{H} & \ctrl{1} & \\
\lstick{$\ket{q_{2i}}$}      & & \ctrl{1} & \gate{H} & \ctrl{0} & \\
\lstick{$\ket{q_{2i-1}}$}    & & \ctrl{0} & \gate{H} & \ctrl{1} & \\
\lstick{$\vdots$} \setwiretype{n}            &     &          &          & \vdots   &     \\
\lstick{$\ket{q_{2}}$}       & & \ctrl{1} & \gate{H} & \ctrl{0} & \\
\lstick{$\ket{q_{1}}$}       & & \ctrl{0} &      &      & \qw
\end{quantikz}

    }
    \caption{The left circuit is $D_{2i}$ and the right circuit is $D_{2i+1}$.}
    \label{fig:Di_gate}
\end{figure}
To prove universality, the remaining task is to implement a CNOT gate with the control qubit $k$ and the target qubit $l$, denoted by $\mathrm{CNOT}_{k,l}$. Our construction proceeds in five steps: in Steps 1 to 3, we prepare elements necessary to construct nearest-neighbor CNOT gates. Using those elements, nearest-neighbor CNOT gates are constructed in Step 4, which is used in Step 5 to realize arbitrary CNOT gates.
\newpage
\begin{enumerate}
    \item Single–qubit operations $S$, $S^{\dagger}$ and $SH$
        
        Besides the gates $H$ and $T$ established in the previous section, it can be shown that the operations
        $S$, $S^{\dagger}$ and $SH$ can all be realized by a sequence of Ry and Rz, i.e., as a depth-0 layer of the Ry-Rz-CZ ansatz. Explicit expression is as follows:
        \begin{align}
S
&=e^{i\frac{\pi}{4}}R_z\left(\frac{\pi}{2}\right)R_y(0), &
S^{\dagger}
&=e^{-i\frac{\pi}{4}}R_z\left(-\frac{\pi}{2}\right)R_y(0)\\
SH
&= e^{-i\frac{3}{4}\pi}R_z\left(\frac{3}{2}\pi\right)R_y\left(-\frac{\pi}{2}\right)
\end{align}

    \item Entangling layer $\CZladder$
    
        By setting all variational parameters in a depth-1 HEA layer to zero we obtain the fixed entangling operation $\CZladder$.

    \item $D_{k}$ and $D^{\dagger}_{k}$
    
        Using a polynomial number of the operations that are obtained in Steps 1 and 2, one can synthesize a family of unitaries $D_{k}$ shown in Fig. \ref{fig:Di_gate} together with their Hermitian conjugate $D^{\dagger}_{k}$. 
        Each $D_{k}$ (or $D_{k}^{\dagger}$) costs $5k$ depth-0 and $2k$ depth-1 Ry-Rz-CZ circuits, so they can be represented by an Ry-Rz-CZ circuit of depth $2\times(5k+2k)-2+2k\times 1 + 2=16k$ without any additional qubits.
        Explicit expression is given in Appendix~\ref{subsec:const-Dk}.
\begin{figure}
    	\begin{tikzpicture}
		\node[scale=0.7] at (0,0) {

\begin{quantikz}[row sep={20pt, between origins}, column sep=5pt]
\lstick{$\ket{q_{N}}$}       & &      &      &       & \\
\lstick{$\vdots$}   \setwiretype{n}          &     &          &          &           &     \\
\lstick{$\ket{q_{2i}}$}      & & \ctrl{1} & \gate{H} &       & \\
\lstick{$\ket{q_{2i-1}}$}    & & \ctrl{0} & \gate{H} & \ctrl{1}  & \\
\lstick{$\ket{q_{2i-2}}$}    & & \ctrl{1} & \gate{H} & \ctrl{0}  & \\
\lstick{$\ket{q_{2i-3}}$}    & & \ctrl{0} & \gate{H} & \ctrl{1}  & \\
\lstick{$\vdots$} \setwiretype{n}            &     &          &          & \vdots    &     \\
\lstick{$\ket{q_{2}}$}       & & \ctrl{1} & \gate{H} & \ctrl{0}  & \\
\lstick{$\ket{q_{1}}$}       & & \ctrl{0} &      &       & \qw
\end{quantikz}
		};
		\node[] at (1.6,0) {$+$};
		\node[scale=0.7] at (3.2,0) {
\begin{quantikz}[row sep={20pt, between origins}, column sep=5pt]
      & &      &      &      & \\
\lstick{$\vdots$}   \setwiretype{n}          &     &          &          &          &     \\
     & &      &      &      & \\
  & &   \ctrl{1}    & \gate{H} & & \\
 & & \ctrl{0} & \gate{H} & \ctrl{1} & \\
  & & \ctrl{1} & \gate{H} & \ctrl{0} & \\
\setwiretype{n}            &     &          &          & \vdots   &     \\
      & & \ctrl{0} & \gate{H} & \ctrl{1} & \\
     & & &      &  \ctrl{0}     & 
\end{quantikz}
		};
        		\node[] at (5,0) {$+$};
		\node[scale=0.7] at (6,0) {
\begin{quantikz}[row sep={20pt, between origins}, column sep=5pt]
      & &  \\
\setwiretype{n}          &    \vdots   & \\
      & \gate{H}&  \\
      & &  \\
      & &  \\
      & &  \\
     \setwiretype{n} &\vdots &  \\
      & &  \\
      & & 
\end{quantikz}
		};
		\node[] at (7,0) {$=$};
		\node[scale=0.7] at (9,0) {
\begin{quantikz}[row sep={20pt, between origins}, column sep=5pt]
      & &  \\
\setwiretype{n}          &    \vdots   & \\
      & \ctrl{1}&  \\
      & \ctrl{0}&  \\
      & &  \\
      & &  \\
     \setwiretype{n} &\vdots &  \\
      & &  \\
      & & 
\end{quantikz}
            };
	\end{tikzpicture}
    \caption{Construction of CZ.}
    \label{fig:construct-cz-1}
\end{figure}
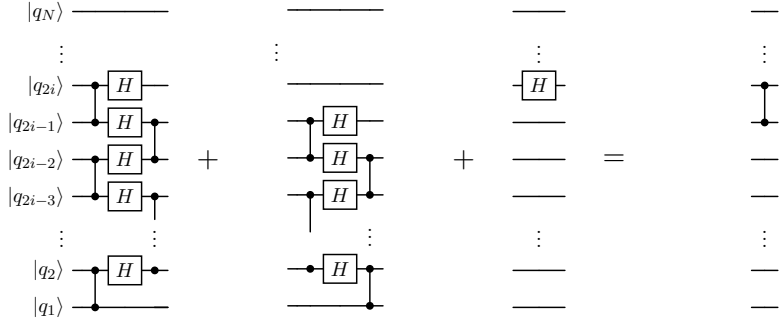

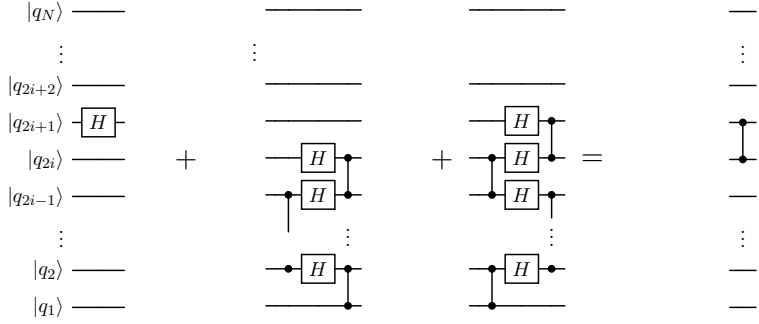
\begin{figure}
    	\begin{tikzpicture}
		\node[scale=0.7] at (0,0) {
\begin{quantikz}[row sep={20pt, between origins}, column sep=5pt]
\lstick{$\ket{q_{N}}$} & & \\
\lstick{$\vdots$}   \setwiretype{n} & & \\
\lstick{$\ket{q_{2i+2}}$}   & & \\
\lstick{$\ket{q_{2i+1}}$}   & \gate{H} & \\
\lstick{$\ket{q_{2i}}$}   & & \\
\lstick{$\ket{q_{2i-1}}$}  & & \\
\lstick{$\vdots$} \setwiretype{n}  & & \\
\lstick{$\ket{q_{2}}$}  & & \\
\lstick{$\ket{q_{1}}$}   & &
\end{quantikz}

		};
		\node[] at (1.6,0) {$+$};
		\node[scale=0.7] at (3.2,0) {
        
\begin{quantikz}[row sep={20pt, between origins}, column sep=5pt]
      & &      &      &      & \\
\lstick{$\vdots$}   \setwiretype{n}          &     &          &          &          &     \\
     & &      &      &      & \\
  & &      & & & \\
 & & & \gate{H} & \ctrl{1} & \\
  & & \ctrl{1} & \gate{H} & \ctrl{0} & \\
\setwiretype{n}            &     &          &          & \vdots   &     \\
      & & \ctrl{0} & \gate{H} & \ctrl{1} & \\
     & & &      &  \ctrl{0}     & 
\end{quantikz}
		};
        		\node[] at (5,0) {$+$};
		\node[scale=0.7] at (6,0) {
        \begin{quantikz}[row sep={20pt, between origins}, column sep=5pt]
   & &      &      &       & \\
\setwiretype{n}          &     &          &          &           &     \\
 & &  & &       & \\
   & &  & \gate{H} & \ctrl{1}  & \\
   & & \ctrl{1} & \gate{H} & \ctrl{0}  & \\
  & & \ctrl{0} & \gate{H} & \ctrl{1}  & \\
\setwiretype{n}            &     &          &          & \vdots    &     \\
      & & \ctrl{1} & \gate{H} & \ctrl{0}  & \\
    & & \ctrl{0} &      &       & 
\end{quantikz}
		};
		\node[] at (7,0) {$=$};
		\node[scale=0.7] at (9,0) {
\begin{quantikz}[row sep={20pt, between origins}, column sep=5pt]
      & &  \\
\setwiretype{n}          &    \vdots   & \\
      & &  \\
      & \ctrl{1}&  \\
      & \ctrl{0}&  \\
      & &  \\
     \setwiretype{n} &\vdots &  \\
      & &  \\
      & & 
\end{quantikz}
            };
	\end{tikzpicture}
    \caption{Another construction of CZ.}
    \label{fig:construct-cz-2}
\end{figure}
\newcommand{\CZ}{\mathrm{CZ}}
    \item Nearest-neighbor CZ.
        
        Combining $H$, $D_{k}$, and $D_{k}^{\dagger}$, we construct the CZ gate acting on adjacent qubits $k$ and $k+1$, $\mathrm{CZ}_{k,k+1}$.
        As schematically shown in Figs.~\ref{fig:construct-cz-1} and \ref{fig:construct-cz-2}, the following circuit identities hold
        \begin{align}
        \CZ_{2i,2i+1}&=D_{2i+1} D_{2i}^\dagger H_{2i}, &
        \CZ_{2i-1,2i} &= H_{2i} D_{2i-1}^\dagger D_{2i}
        \end{align}
        where $H_k$ denotes the Hadamard gate at the $k$-th qubit. One can check from the explicit expression that $\CZ_{k,k+1}$ can be constructed by a $32k+18$-depth Ry-Rz-CZ circuit. More generally, any nearest-neighbor CZ gate in an $N$-qubit circuit can be realized by a $32N+18$-depth Ry-Rz-CZ circuit.

    \item Arbitrary-distance CNOT.
    
    Finally, let us create arbitary CNOT gates from the operations constructed above. First, an nearest-neighbor CNOT gate can be constructed by adding two $H$ gates on both sides of the nearest-neighbor CZ gate (Fig.~\ref{fig:CZ_CNOT}), realized by a $32k+22$-depth Ry-Rz-CZ circuit. Then, by adding SWAP gates to the nearest-neighbor CNOT gate, one can construct an arbitrary CNOT gate (Fig.~\ref{fig:CNOT_universal}). It is well known that a SWAP gate is realized by three CNOT gates (Fig.~\ref{fig:CNOT_SWAP}). Note that the distance satisfies $\left|k-l\right|\leq \max(k,l)\leq N$, where $N$ is the number of qubits of the target circuit. As can be seen in Fig.~\ref{fig:CNOT_universal}, $2\times(\left|k-l\right|-1)$ SWAPs to propagate the $l$-th qubit to $k+1$-th (or $k-1$-th if $l>k$) qubit and back. In total, the number of nearest-neighbor CNOT gates required for $\mathrm{CNOT}_{k,l}$ is 
        \begin{equation}\label{eq:ryrzcz-6n-5}
            3\times 2\times (\left|k-l\right|-1)+1=6\left|k-l\right|-5\leq 6N-5
        \end{equation}
    We conclude that an arbitrary CNOT gate in an $N$-qubit circuit can be realized by an $N$-qubit Ry-Rz-CZ circuit of depth at most
    \begin{equation}
        2(6N-5)-2+(6N-5)\times (32N+22)=192N^2 - 16N - 122,
    \end{equation}
    which is polynomial in $N$ as claimed.

\end{enumerate}

\begin{figure}[tb]
        \makebox[\textwidth]{
\begin{quantikz}
\lstick{$\ket{q_{N}}$}     & &      & \\
\lstick{$\vdots$} \setwiretype{n}          &     &          &     \\
\lstick{$\ket{q_{i}}$}     & & \ctrl{1} & \\
\lstick{$\ket{q_{i+1}}$}   & & \targ{}  & \\
\lstick{$\vdots$}  \setwiretype{n}         &     &          &     \\
\lstick{$\ket{q_{1}}$}     & &      & \qw
\end{quantikz}

        \quad
        \raisebox{0em}{$=$}
        \quad
        \quad
        \quad
\begin{quantikz}
\lstick{$\ket{q_{N}}$}     & &      &      &      & \\
\lstick{$\vdots$} \setwiretype{n}          &     &          &          &          &     \\
\lstick{$\ket{q_{i+1}}$}   & &      & \ctrl{1} &      & \\
\lstick{$\ket{q_{i}}$}     & & \gate{H} & \ctrl{0} & \gate{H} & \\
\lstick{$\vdots$} \setwiretype{n}          &     &          &          &          &     \\
\lstick{$\ket{q_{1}}$}     & &      &      &      & \qw
\end{quantikz}

        }
    \caption{Construction of a nearest-neighbor CNOT from nearest-neighbor CZ.}
    \label{fig:CZ_CNOT}
    \end{figure}
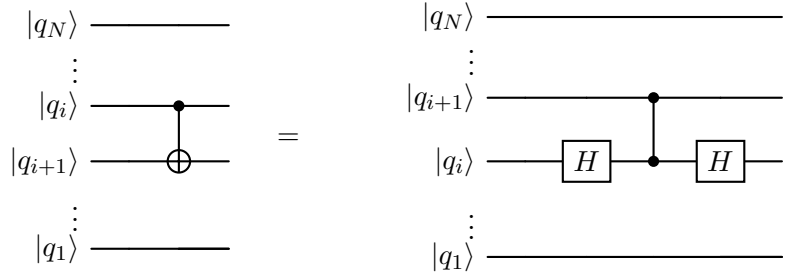

    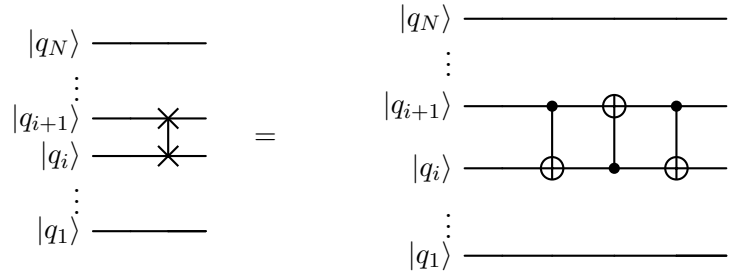
\begin{figure}[tb]
    \centering
        \makebox[\textwidth]{
\begin{quantikz}
\lstick{$\ket{q_{N}}$}     & &        & \\
\lstick{$\vdots$} \setwiretype{n}          &     &            &     \\
\lstick{$\ket{q_{i+1}}$}   & & \swap{1}   & \\
\lstick{$\ket{q_{i}}$}     & & \swap{-1}  & \\
\lstick{$\vdots$} \setwiretype{n}          &     &            &     \\
\lstick{$\ket{q_{1}}$}     & &        & \qw
\end{quantikz}

        \quad
        \raisebox{0em}{$=$}
        \quad
        \quad
        \quad
\begin{quantikz}
\lstick{$\ket{q_{N}}$}     & &      &        &      & \\
\lstick{$\vdots$} \setwiretype{n}          &     &          &            &          &     \\
\lstick{$\ket{q_{i+1}}$}   & & \ctrl{1} & \targ{}    & \ctrl{1} & \\
\lstick{$\ket{q_{i}}$}     & & \targ{}  & \ctrl{-1}  & \targ{}  & \\
\lstick{$\vdots$} \setwiretype{n}          &     &          &            &          &     \\
\lstick{$\ket{q_{1}}$}     & &      &        &      & \qw
\end{quantikz}

        }
    \caption{Construction of a nearest-neighbor SWAP from nearest-neighbor CNOTs.}
    \label{fig:CNOT_SWAP}
    \end{figure}
    
    \begin{figure}[tb]
    \centering
        \makebox[\textwidth]{
\begin{quantikz}
\lstick{$\ket{q_{N}}$}   & & & &      & & & \\
\lstick{$\vdots$}  \setwiretype{n}       &     &     &     &          &     &     &     \\
\lstick{$\ket{q_{i}}$}   & & & & \ctrl{2} & & & \\
\lstick{$\vdots$}  \setwiretype{n}       &     &     &     &          &     &     &     \\
\lstick{$\ket{q_{j}}$}   & & & & \targ{}  & & & \\
\lstick{$\vdots$}  \setwiretype{n}       &     &     &     &          &     &     &     \\
\lstick{$\ket{q_{1}}$}   & & & &      & & & \qw
\end{quantikz}

        \quad
        \raisebox{0em}{$=$}
        \quad
        \quad
        \quad
\begin{quantikz}
\lstick{$\ket{q_{N}}$}     & &      &      &     &      &      & \\
\lstick{$\vdots$}  \setwiretype{n}         &     &          &          &         &          &          &     \\
\lstick{$\ket{q_{i}}$}     & &      &      & \ctrl{1} &     &      & \\
\lstick{$\ket{q_{i-1}}$}   & &      & \swap{1} & \targ{} & \swap{1} &      & \\
\lstick{$\ket{q_{i-2}}$}   & & \swap{1} & \swap{-1}&     & \swap{-1}& \swap{1} & \\
\lstick{$\vdots$}  \setwiretype{n}         &     & \vdots   &          &         &          & \vdots   &     \\
\lstick{$\ket{q_{j}}$}     & & \swap{-1}&      &     &      & \swap{-1}& \\
\lstick{$\vdots$}          &     &          &          &         &          &         &     \\
\lstick{$\ket{q_{1}}$}     & &      &      &     &      &      & \qw
\end{quantikz}
        }
    \caption{Construction of $CNOT_{i,j}$ for arbitrary $i,j$ from nearest-neiaghbor CNOT and SWAPs.}
    \label{fig:CNOT_universal}
\end{figure}
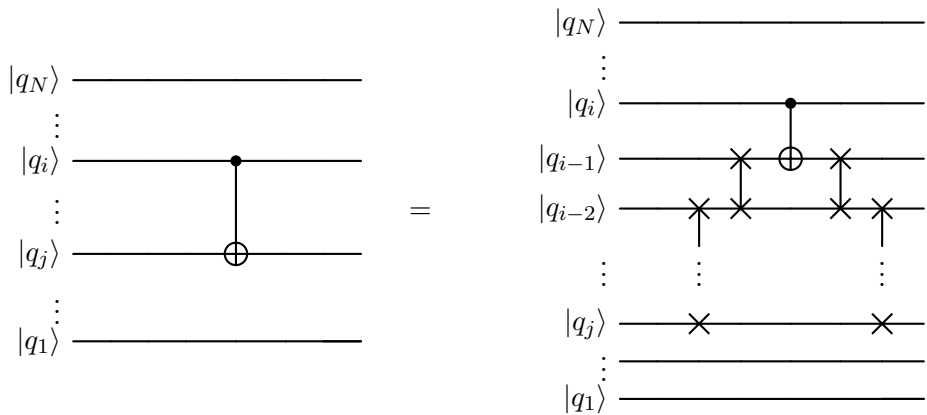

Now that we have explicitly constructed $N$-qubit Ry-Rz-CZ circuits for each gate in a strictly universal gate set $\mathcal{G}=\{H,T,\CNOT\}$, where the depth is $0$ for $H$ and $T$, and $192N^2 - 16N - 122$ for $\CNOT$, one can now construct a $N$-qubit, $\poly(N)$-depth Ry-Rz-CZ circuit representation of any $N$-qubit quantum circuit with $\poly(N)$ gates in $\mathcal{G}$ by merging the ansatz circuits for each gate following Eq.~(\ref{eq:depth-RyRz}). This concludes the proof of Theorem \ref{thm:cz_universal}.

\subsubsection{$\HEA(R_y, \CNOTladder)$}\label{sec:cnot_universality}

In this subsection, we consider a computationally universal gate set $\{R, \mathrm{CNOT}\}$~\cite{R_CNOT_universality}, where $R$ is any \emph{real} single-qubit gate whose square is not diagonal in the computational basis~\cite{R_CNOT_universality}, as we mentioned in Section~\ref{sec:computational_universality_gate_set}. It is actually possible to construct more general gate set $\{R_y, \CNOT\}$ by the Ry-CNOT ansatz, so we focus on this gate set. $R_y$ can be constructed by a depth-0 Ry-CNOT circuit without additional qubits, and CNOT in an $N$-qubit target circuit can be constructed by an $N+4$-qubit Ry-CNOT circuit of depth at most $96N^2+400N-400$.

\paragraph{Single-qubit gate $R_y$}\label{para:cnot_one_qubit}\mbox{}\\[0.1em]
The single-qubit gates $R_y(\theta)$ on $k$-th qubit can be realized by a depth-0 Ry-CNOT circuit by setting the angle parameter of $k$-th qubit to $\theta$ and setting other parameters to zero.

\paragraph{Two-qubit gate CNOT}\label{para:cnot_two_qubit}\mbox{}\\[0.1em]
To complete the proof of the universality, we need a recipe for an arbitrary CNOT gate $\mathrm{CNOT}_{k,l}$.
Our construction proceeds again in five steps: Steps 1-3 prepares necessary ingredients for nearest-neighbor CNOT gates, which is constructed in Step 4 and used in Step 5 to produce arbitrary CNOT gates.
\newpage
\begin{enumerate}
    \item Single–qubit opearations $HX$ and $XH$
        
        The operations $HX$ and $XH$ can be realized by an $R_y$ gate, and thus can be represented as a depth-0 Ry-CNOT circuit.
        \begin{align}
            HX
            &=R_y\left(-\frac{\pi}{2}\right),&
            XH
            &=R_y\left(\frac{\pi}{2}\right)
\end{align}

    \item Entangling layer $\CNOTladder$
    
        By setting all variational parameters in a depth-1 HEA layer to zero we obtain the fixed entangling operation $\CNOTladder$.
        
    \item $N$-qubit inverse CNOT ladder $\CNOTladder^{\dagger}$
        
        As another important ingredient, we show that $\CNOTladder^{\dagger}$ can be realized in the depth-$2^{\,\lceil\log_{2}N\rceil}-1$ Ry-CNOT ansatz. To show that, we first introduce a proposition, which we prove in Appendix~\ref{section:proof-order-of-CNOT-ladder}.

\begin{prop}\label{prop:order-of-CNOT-ladder}
For every integer $N\ge 2$,
$$
   (\CNOTladderN{N})^{\,2^{\lceil\log_{2}N\rceil}} \;=\; E^{(N)},
$$
where $E^{(N)}$ is the $N$-qubit identity.
\end{prop}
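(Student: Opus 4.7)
The plan is to recast the $N$-qubit CNOT ladder as an $\mathbb{F}_2$-linear transformation on $\mathbb{F}_2^N$ and exploit a triangular structure to pin down its order. Writing $L = \CNOTladderN{N}$ and ordering the coordinates from $x_N$ (top) down to $x_1$ (bottom), in this basis every CNOT gate acts as an elementary matrix $I + e_{t,c}$ over $\mathbb{F}_2$, where $e_{t,c}$ has a single $1$ in the row (column) corresponding to the target (control) qubit position.

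The first and central step will be to establish that $L$ is lower unitriangular in this basis. Inspecting Fig.~\ref{fig:CNOT_ladder}, every CNOT appearing in either the even-$N$ or the odd-$N$ ladder has the form $\CNOT_{2k,\,2k-1}$ or $\CNOT_{2k+1,\,2k}$, so the control index strictly exceeds the target index in every single gate. With our basis ordering, each elementary factor $I + e_{t,c}$ then has its off-diagonal entry strictly below the diagonal, i.e., it is lower unitriangular. Since the product of lower unitriangular matrices is lower unitriangular, $L$ inherits the same structure.

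The remaining step is short. Because $L - I$ is strictly lower triangular on an $N$-dimensional space, $(L - I)^N = 0$. Working over $\mathbb{F}_2$ and using that $I$ commutes with $L - I$, Frobenius's ``freshman's dream'' gives
\begin{equation}
L^{2^k} \;=\; \bigl(I + (L - I)\bigr)^{2^k} \;=\; I + (L - I)^{2^k}
\end{equation}
for every nonnegative integer $k$. Taking $k = \lceil \log_2 N \rceil$ ensures $2^k \ge N$, so $(L - I)^{2^k} = 0$ and therefore $L^{2^k} = E^{(N)}$, which is exactly the claim.

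I do not anticipate a genuine obstacle along this route; the entire argument hinges on the orientation observation (higher-indexed control, lower-indexed target) that is visible by inspection from the ladder definition, and the odd- and even-$N$ cases are handled uniformly because both respect this orientation. The only alternative difficulty would have been to argue the exact order (rather than a divisibility bound), but the statement only demands divisibility by $2^{\lceil \log_2 N \rceil}$, which is immediate from the nilpotency index bound plus the characteristic-2 binomial expansion.
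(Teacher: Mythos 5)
Your proof is correct, and it takes a genuinely different route from the paper's. The paper proceeds by induction on $N$: it peels $\CNOTladderN{N+1}$ into $\CNOTladderN{N}^{\dagger}\,\mathrm{CNOT}_{N+1,N}$, block-decomposes on the new qubit, and reduces the inductive step to showing $\bigl(X_{N}\CNOTladderN{N}\bigr)^{2^{\lceil\log_{2}(N+1)\rceil}}=E^{(N)}$, which it establishes by propagating the $X$ gates through the ladders, counting the resulting insertions via a lattice-path/hockey-stick argument, and invoking Lucas' theorem to show every count is even. Your argument replaces all of this with a single structural observation: since every CNOT in either ladder has its control on a strictly higher-indexed qubit than its target, the ladder's action on bit strings is a unitriangular matrix $L$ over $\mathbb{F}_2$, so $L-I$ is nilpotent with $(L-I)^{N}=0$, and the characteristic-$2$ binomial expansion gives $L^{2^{k}}=I+(L-I)^{2^{k}}=I$ once $2^{k}\ge N$. (The one implicit step worth stating is that a CNOT-only circuit is fully determined by its $\mathbb{F}_2$-linear action on computational basis strings, so $L^{2^k}=I$ as a matrix over $\mathbb{F}_2$ does imply the unitary is $E^{(N)}$; this is standard.) Your route is shorter, avoids the combinatorics entirely — the parity facts the paper extracts from Lucas' theorem are exactly what the freshman's dream packages for free — and generalizes immediately to any CNOT circuit whose control–target pairs are consistently oriented. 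What the paper's proof buys in exchange is an explicit, gate-level bookkeeping of where the propagated $X$'s land, but nothing in the rest of the paper depends on that extra information.
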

This relation reads
\begin{align}\label{ladder_cnot_equality}
   \CNOTladderN{N}^{\dagger}
   \;=\;
   \CNOTladderN{N}^{\dagger}\,
   (\CNOTladderN{N})^{\,2^{\lceil\log_{2}N\rceil}}
   &\;=\;
   (\CNOTladderN{N})^{\,2^{\lceil\log_{2}N\rceil}-1}.
\end{align}
Equation (\ref{ladder_cnot_equality}) demonstrates that $\CNOTladderN{N}^{\dagger}$ can indeed be realized the Ry-CNOT circuit of depth $2^{\lceil\log_{2}N\rceil}-1$ with all the angle parameters set to zero.

    \item Nearest-neighbor $\CNOT$.
    
        Combining $R_y$ (including $HX$ and $XH$), $\CNOTladder$, and $\CNOTladder^{\dagger}$, we construct nearest-neighbor CNOT gates $\mathrm{CNOT}_{k,k+1}$ and $\mathrm{CNOT}_{k+1,k}$.
        More precisely, both $\mathrm{CNOT}_{k,k+1}$ and $\mathrm{CNOT}_{k+1,k}$ can be realized by a $16\times 2^{\lceil\log_{2}(N+4)\rceil}$-depth and $N+4$-qubit Ry-CNOT circuit.
        We leave the rigorous proof of this statement in Appendix~\ref{subsec:adj-CNOT-in-Ry-CNOT}, and outline the proof here. We first show that there is a 6-qubit, Ry-CNOT circuit that realizes $CNOT_{2,1}$
        (Fig~\ref{fig:CNOT_universal_Ry1}). 
        One can see that the circuit is composed of 16 $\CNOTladder$ and 16 $\CNOTladder^\dagger$, and it can be shown that the circuit can be embedded in a circuit with same structure~\footnote{Note that which entangling layer is called $\CNOTladder$ and which is $\CNOTladder^\dagger$ depends on the parity of $k$, but since they are used in pair, the depth count is not affected.} with any number of qubits. 
        This means that, if one adds four~\footnote{6 (the 6-qubit circuit)$-2$ (two-qubit gate)$=4$} ancilla qubits, $\CNOT_{k+1,k}$ for any $k$ can be realized efficiently by an Ry-CNOT circuit. 
        One can use $HX$ and $XH$ to flip the control and target qubits, concluding that one can implement arbitrary nearest-neighbor CNOT gates as $N+4$-qubit Ry-CNOT circuit.
        The total depth is $16\times 1 + 16\times (2^{\lceil\log_{2}(N+4)\rceil}-1)=16\times 2^{\lceil\log_{2}(N+4)\rceil}$.

    \item Arbitrary-distance $\CNOT$.
    
        As we discussed in the previous case (Eq.~(\ref{eq:ryrzcz-6n-5})), $\CNOT_{k,l}$ can be decomposed into $6\left|k-l\right|-5$ nearest-neighbor CNOT gates, and as each nearest-neighbor CNOT gate costs $16\times 2^{\lceil\log_{2}(N+4)\rceil}$ depths in the Ry-CNOT ansatz, $\CNOT_{k,l}$ can be realized by a depth of~\footnote{Recall that concatenating two Ry-CNOT circuits into one does not require any additional depth.} $(6|k-l|-5) \times(16\times 2^{\lceil\log_{2}(N+4)\rceil})< (6N-5)\times (16\times(N+5))=96N^2+400N-400$.

\end{enumerate}

Thus, having demonstrated that every gate in the computationally universal gate set $\mathcal{G}=\{R_y, \CNOT\}$ can be implemented efficiently in the Ry-CNOT ansatz with addition of a constant number (four) of ancilla qubits, one can construct an $N+4$-qubit, $\poly(N)$-depth Ry-CNOT circuit representation of any $N$-qubit target quantum circuit with $\poly(N)$ gates in $\mathcal{G}$ by merging the ansatz circuits for each gate in the target circuit following Eq.~(\ref{eq:depth-Ry}), proving that the Ry-CNOT ansatz is computationally universal (Theorem \ref{thm:cnot_universal}).

\section{Discussion\label{sec:discussion}}
In the previous section, we have shown that the Ry-Rz-CZ ansatz is strictly universal and the Ry-CNOT ansatz is computationally universal, in the sense that we defined at the end of Section~\ref{subsec:universal_gate_sets}. An immediate consequence is that the problem of simulating those ansatz circuits of $N$-qubit and $\poly(N)$-depth is BQP-complete, because (i) we have shown that a BQP-complete problem of simulating a quantum circuit can be reduced to simulating some Ry-Rz-CZ (resp. Ry-CNOT) circuit (BQP-hard) and (ii) simulating a Ry-Rz-CZ (resp. Ry-CNOT) circuit can be efficiently done on quantum computers and thus is in BQP.

This means that simulating those ansatz is, assuming that the polynomial hierarchy does not collapse, classically intractable.
More precisely, it can be shown that the weak classical simulability of one of the two HEA ansatzes within multiplicative error $1\leq c <\sqrt{2}$ would imply that the polynomial hierarchy collapses to the third level, following Corollary~1 of Ref.~\cite{Bremner_2016}. This roughly means in practice that a classical state-vector or sampling simulation of those ansatzes is not practically possible unless a widely believed conjecture in complexity theory is violated.

Note that, as we have shown that Ry-CNOT is hard to simulate, Ry-Rz-CNOT is also hard to simulate, at least in the worst case where all Rz's angles are set to zero. Thus we have shown that Ry-Rz-CZ, Ry-CNOT, and Ry-Rz-CNOT are hard to simulate at least in the worst case. On the other hand, the hardness of Ry-CZ is is an open problem, as we will discuss shortly.

As those particular type of HEA circuits are widely used in (mainly near-term) quantum algorithms such as VQE~\cite{meitei2021gate} and quantum-selected configuration interaction (QSCI)~\cite{kanno2023quantum}, our result gives a theoretical support on those applications that they are very unlikely to be simulable on classical computers.

We also have various directions for future works.
A most straightforward question is if $\HEA(Ry, \CZladder)$ is computationally universal or not. At the moment it is simply that a method to generate the universal gate set as a Ry-CZ circuit is yet to be found, but it would be interesting to see if there is a more fundamental reason that the Ry-CZ ansatz cannot be universal.

It would also be interesting to see if similar results can be shown for broader classes of circuits. HEAs with other types of entangling layers is also widely used, and its universality is of great interest. 
There are various other quantum circuits that are known to be classically sampling hard, such as IQP~\cite{doi:10.1098/rspa.2010.0301}, QAOA~\cite{farhi2019quantumsupremacyquantumapproximate}, match gate circuits~\cite{jozsa2008matchgates, brod2011extending, brod2012geometries,brod2016efficient, hebenstreit2019all}, fermionic systems~\cite{beenakker2004charge,arrazola2022universal}, fermionic linear optics circuits~\cite{divincenzo2005fermionic,oszmaniec2017universal}
as well as chemistry-motivated ansatzes such as the unitary coupled-cluster with singles and doubles (UCCSD)~\cite{1701.02691, 2503.21041} and the unitary cluster Jastrow~\cite{doi:10.1021/acs.jctc.9b00963, 2504.12893} are shown to be sampling-hard. It would be interesting to see if, for example, the symmetry-preserving ansatz~\cite{1904.10910,2002.11724} can also shown to be universal in the sense of Ref.~\cite{2106.13839}.

More practically, it will be interesting to see if we can extend our result to the average-case hardness~\footnote{It is likely that some average-case hardness can be shown for a small subset of the parameter space of HEA, but a more general result would be preferable for practical relevance.}, which is more relevant in practice. In particular, there are some approximate simulation methods in the literature such as Ref.~\cite{2409.01706}, which claims that a certain type of circuits including the HEA circuits, can be simulated (one can compute the expectation value of some observables) efficiently in some accuracy, except for some special cases. There is no conflict between that work and ours, if we think of our worst cases as an exception in Ref.~\cite{2409.01706}, but further study on this would be of great interest.

\section{Conclusion}\label{sec:conclusion}
In this article, we showed that the Ry-Rz-CZ ansatz is strictly universal and the Ry-CNOT ansatz is computationally universal, in the sense that any $N$-qubit quantum circuit with $\poly(N)$ gates in strictly (resp., computationally) universal gate set can be represented as a $\poly(N)$-depth, $\poly(N)$-qubit Ry-Rz-CZ (resp., Ry-CNOT) ansatz. This implies that the problem of simulating those ansatzes is BQP-hard, which is a strong evidence for quantum advantage of simulating those ansatzes.

There are various ways to further explore based on this study, such as studying the classical simulability of the Ry-CZ ansatz, or other HEAs with different single-qubit gates, entangling gates and the structure of the entangling layer.
It would also be practically important to study other popular ansatz such as the symmetry-preserving ansatz~\cite{1904.10910,2002.11724}.
\newpage

\bibliographystyle{apsrev4-2}
\bibliography{apssamp}

\newpage

\appendix

\section{Proof details}

\subsection{Construction of $D_k$ in the Ry-Rz-CZ ansatz}\label{subsec:const-Dk}
We construct the circuit family $D_{k}$, which plays a central role in our synthesis of nearest-neighbor CNOTs,  as a Ry-Rz-CZ circuit. The available primitive operations constructed in the main text are the single-qubit gates
$$
H,S,S^{\dagger},SH,
$$
which are constructed by 0-depth Ry-Rz-CZ circuits, and entangling block
$$
\CZladder,
$$
which is constructed by a 1-depth Ry-Rz-CZ circuit. In the following, we explain a constructive procedure that produces $D_{k}$ for every qubit index $k$ using only a polynomial number of these primitives.
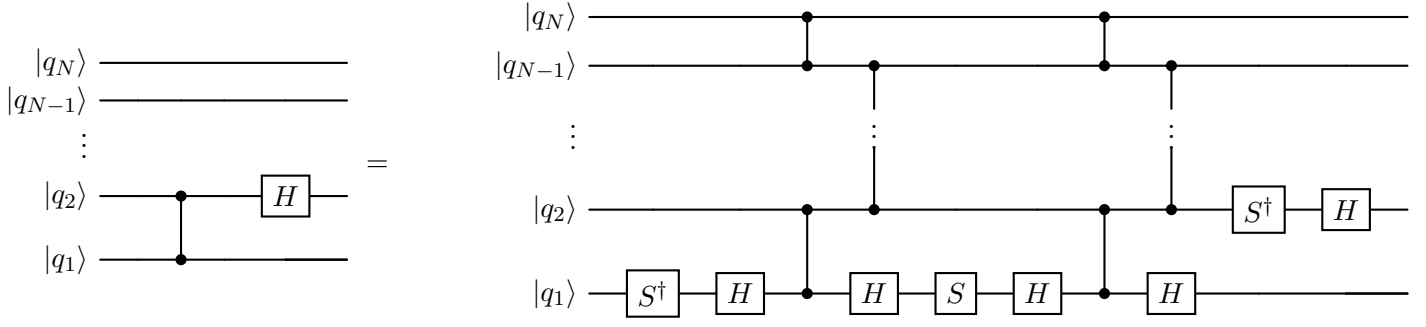
\begin{figure}[tb]
\centering
    \makebox[\textwidth]{
\begin{quantikz}
\lstick{$\ket{q_{N}}$}     & &      & &      & \\
\lstick{$\ket{q_{N-1}}$}   & &      & &      & \\
\lstick{$\vdots$} \setwiretype{n}          &     &          &     &          &     \\
\lstick{$\ket{q_{2}}$}     & & \ctrl{1} & & \gate{H} & \\
\lstick{$\ket{q_{1}}$}     & & \ctrl{0} & &      & \qw
\end{quantikz}

    \raisebox{0.0em}{$=$}
    \quad
    \quad
    \quad
\begin{quantikz}
\lstick{$\ket{q_{N}}$}       &                &      & \ctrl{1} &       &        &        & \ctrl{1} &       &               &     & \\
\lstick{$\ket{q_{N-1}}$}     &                &      & \ctrl{0} & \ctrl{1}  &        &        & \ctrl{0} & \ctrl{1}  &               &     & \\
\lstick{$\vdots$} \setwiretype{n}            &                    &          &          & \vdots    &            &            &          & \vdots    &                   &         &      \\
\lstick{$\ket{q_{2}}$}       &                &      & \ctrl{1} & \ctrl{-1}  &        &        & \ctrl{1} & \ctrl{-1}  & \gate{S^{\dagger}} & \gate{H} & \\
\lstick{$\ket{q_{1}}$}       & \gate{S^{\dagger}} & \gate{H} & \ctrl{0} & \gate{H}  & \gate{S}   & \gate{H}   & \ctrl{0} & \gate{H}  &               &     & \qw
\end{quantikz}

    }
\caption{A quantum circuit identity for $D_1$. On the right, using the fact that $CZ_{i+1,i}$ and $CZ_{i+2,i+1}$ are commutative and CZs acting between the same qubits cancel out each other, all gates except those acting only on $q_1$ and $q_2$ disappear. Then, it can be shown by concrete matrix calculations that the gates acting on $q_1,q_2$ are equivalent to $D_1$ as a whole.}
\label{fig:D1-circuit}
\end{figure}

First we construct $D_1$ gate with $H,SH,HS^{\dagger},\CZladder$. 
\begin{align}
D_{1} &= (H_{2})(S^{\dagger}_{2}H_1)(\CZladder)(H_{1})(S_{1}H_{1})(\CZladder)(H_{1})(S^{\dagger }_{1}),
\end{align}
where subscripts except those for $D_k$ denote the qubit that those single-qubit gates act on.
Figure~\ref{fig:D1-circuit} give an explicit circuit for $D_{1}$. $D_1$ can be implemented with 6 depth-0 circuits and 2 depth-1 circuits, i.e., by an Ry-Rz-CZ circuit of depth $2(6+2)-2+2=16$.

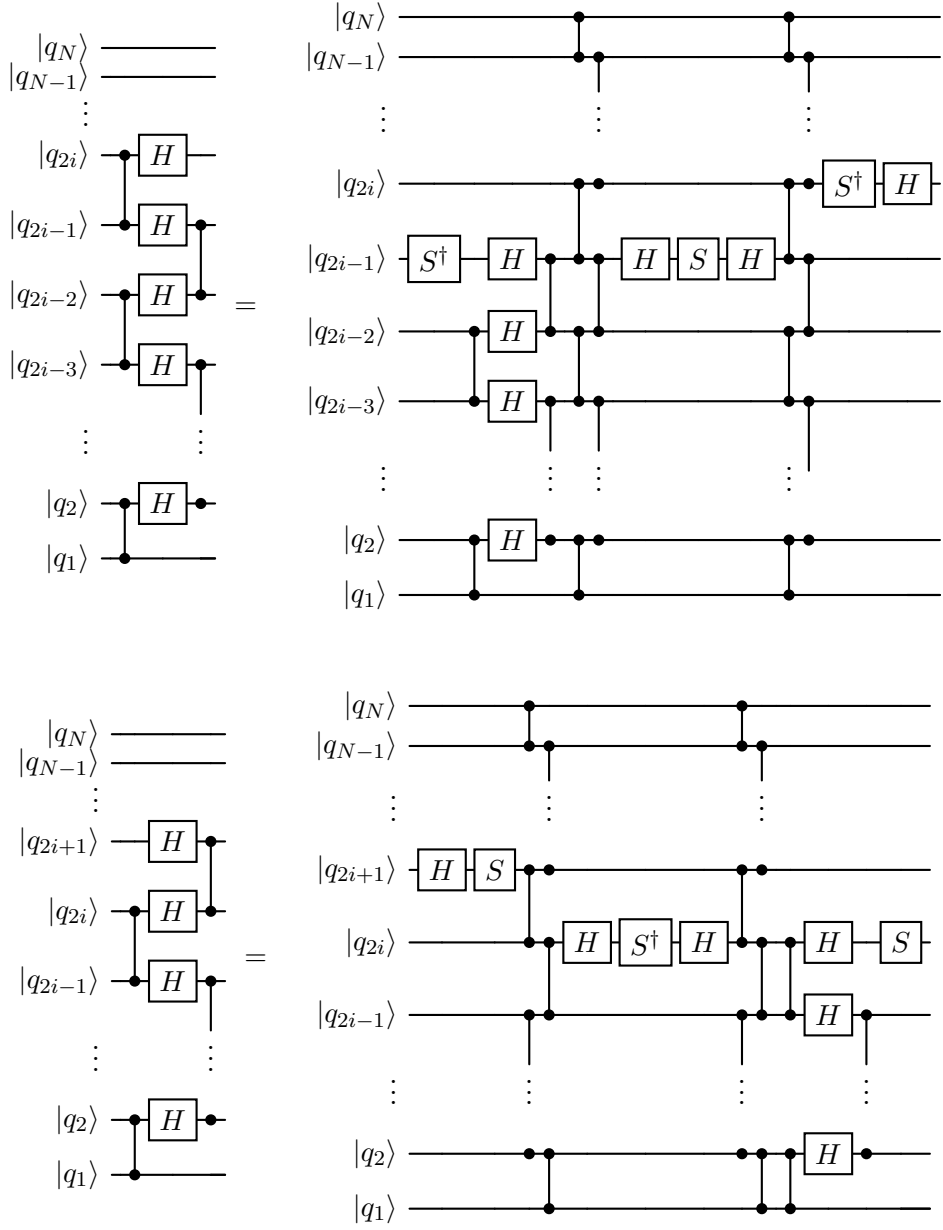
\begin{figure}[tb]
    \makebox[0.8\textwidth]{
\begin{quantikz}[row sep=1.0em, column sep=0.3em]
\lstick{$\ket{q_{N}}$}       & &      &      &       & \\
\lstick{$\ket{q_{N-1}}$}     & &      &      &       & \\
\lstick{$\vdots$} \setwiretype{n}            &     &          &          &           &      \\
\lstick{$\ket{q_{2i}}$}      & & \ctrl{1} & \gate{H} &       & \\
\lstick{$\ket{q_{2i-1}}$}    & & \ctrl{0} & \gate{H} & \ctrl{1}  & \\
\lstick{$\ket{q_{2i-2}}$}    & & \ctrl{1} & \gate{H} & \ctrl{0}  & \\
\lstick{$\ket{q_{2i-3}}$}    & & \ctrl{0} & \gate{H} & \ctrl{1}  & \\
\lstick{$\vdots$} \setwiretype{n}            &     &          &          & \vdots    &      \\
\lstick{$\ket{q_{2}}$}       & & \ctrl{1} & \gate{H} & \ctrl{0}  & \\
\lstick{$\ket{q_{1}}$}       & & \ctrl{0} &      &       & \qw
\end{quantikz}

    \raisebox{0.0em}{$=$}
    \quad
\begin{quantikz}[row sep=1.0em, column sep=0.3em]
\lstick{$\ket{q_{N}}$}       & &      &      &      &      & \ctrl{1}  &       &      &       &       &      & \ctrl{1}  &       &      &       & \\
\lstick{$\ket{q_{N-1}}$}     & &      &      &      &      & \ctrl{0}  & \ctrl{1}  &      &       &       &      & \ctrl{0}  & \ctrl{1}  &      &       & \\
\lstick{$\vdots$} \setwiretype{n}           &     &          &          &          &          &           & \vdots    &          &           &           &          &           & \vdots    &          &           &      \\
\lstick{$\ket{q_{2i}}$}      & &      &      &      &      & \ctrl{1}  & \ctrl{0}  &      &       &       &      & \ctrl{1}  & \ctrl{0}  & \gate{S^{\dagger}} & \gate{H} & \\
\lstick{$\ket{q_{2i-1}}$}    & \gate{S^{\dagger}} & & \gate{H} & \ctrl{1} & & \ctrl{0} & \ctrl{1} & & \gate{H} & \gate{S} & \gate{H} & \ctrl{0} & \ctrl{1} & & & \\
\lstick{$\ket{q_{2i-2}}$}    & & \ctrl{1} & \gate{H} & \ctrl{0} & & \ctrl{1} & \ctrl{0} & & & & & \ctrl{1} & \ctrl{0} & & & \\
\lstick{$\ket{q_{2i-3}}$}    & & \ctrl{0} & \gate{H} & \ctrl{1} & & \ctrl{0} & \ctrl{1} & & & & & \ctrl{0} & \ctrl{1} & & & \\
\lstick{$\vdots$} \setwiretype{n}           &     &          &          & \vdots   &     &         & \vdots    &     &      &      &      & \vdots    &         &      &      &      \\
\lstick{$\ket{q_{2}}$}       & & \ctrl{1} & \gate{H} & \ctrl{0} & & \ctrl{1} & \ctrl{0} & & & & & \ctrl{1} & \ctrl{0} & & & \\
\lstick{$\ket{q_{1}}$}       & & \ctrl{0} &      &      & & \ctrl{0} &      & & & & & \ctrl{0} &      & & & \qw
\end{quantikz}
    }
    \\
    \makebox[\textwidth]{}
    \makebox[\textwidth]{}
    \makebox[\textwidth]{
\begin{quantikz}[row sep=1.0em, column sep=0.3em]
\lstick{$\ket{q_{N}}$}       & &      &      &      & \\
\lstick{$\ket{q_{N-1}}$}     & &      &      &      & \\
\lstick{$\vdots$} \setwiretype{n}           &     &          &          &          &      \\
\lstick{$\ket{q_{2i+1}}$}    & &      & \gate{H} & \ctrl{1} & \\
\lstick{$\ket{q_{2i}}$}      & & \ctrl{1} & \gate{H} & \ctrl{0} & \\
\lstick{$\ket{q_{2i-1}}$}    & & \ctrl{0} & \gate{H} & \ctrl{1} & \\
\lstick{$\vdots$} \setwiretype{n}            &     &          &          & \vdots   &      \\
\lstick{$\ket{q_{2}}$}       & & \ctrl{1} & \gate{H} & \ctrl{0} & \\
\lstick{$\ket{q_{1}}$}       & & \ctrl{0} &      &      & \qw
\end{quantikz}

    \raisebox{0.0em}{$=$}
    \quad
\begin{quantikz}[row sep=1.0em, column sep=0.3em]
\lstick{$\ket{q_{N}}$}       & &      & \ctrl{1} &      &      &            &      & \ctrl{1} &      &      &      &      &      &      & \\
\lstick{$\ket{q_{N-1}}$}     & &      & \ctrl{0} & \ctrl{1} &      &            &      & \ctrl{0} & \ctrl{1} &      &      &      &      &      & \\
\lstick{$\vdots$} \setwiretype{n}            &     &          &          & \vdots   &          &                &          &          & \vdots   &          &          &          &          &          &      \\
\lstick{$\ket{q_{2i+1}}$}    & \gate{H} & \gate{S} & \ctrl{1} & \ctrl{0} & &      &      & \ctrl{1} & \ctrl{0} &      &      &      &      &      & \\
\lstick{$\ket{q_{2i}}$}      & &      & \ctrl{0} & \ctrl{1} & \gate{H} & \gate{S^{\dagger}} & \gate{H} & \ctrl{0} & \ctrl{1} & & \ctrl{1} & \gate{H} & & \gate{S} & \\
\lstick{$\ket{q_{2i-1}}$}    & &      & \ctrl{1} & \ctrl{0} &      &            &      & \ctrl{1} & \ctrl{0} & & \ctrl{0} & \gate{H} & \ctrl{1} & & \\
\lstick{$\vdots$} \setwiretype{n}            &     &          & \vdots   &          &          &                &          & \vdots   &          &     &         &          & \vdots   &     &     \\
\lstick{$\ket{q_{2}}$}       & &      & \ctrl{0} & \ctrl{1} &      &            &      & \ctrl{0} & \ctrl{1} & & \ctrl{1} & \gate{H} & \ctrl{0} & & \\
\lstick{$\ket{q_{1}}$}       & &      &      & \ctrl{0} &      &            &      &      & \ctrl{0} & & \ctrl{0} &      &      & & \qw
\end{quantikz}

    }
    \caption{Circuit identities for $D_{2i}$ (top) and $D_{2i+1}$ (bottom). To go from left to right, using the fact that $CZ_{i+1,i}$ and $CZ_{i+2,i+1}$ are commutative and CZs acting between the same qubits cancel out each other, all gates of $\CZladder$ except those acting only on $q_{2i},q_{2i-1}$ and $q_{2i-2}$ for $D_{2i}$ or $q_{2i+1},q_{2i}$ and $q_{2i-1}$  for $D_{2i+1}$ disappear.}
    \label{fig:Dk+1-circuit}
\end{figure}

Second, assuming that $D_k$ is given, we construct $D_{k+1}$ gate with $D_{k},S,S^{\dagger},H,HS,S^{\dagger}H,\CZladder$. Since we constructed $D_1$ already, this construction enables a construction of $D_k$ for arbitrary $k$. The construction is given as follows:
\begin{align*}
D_{2i} &= (H_{2i})(S^{\dagger}_{2i}) (\CZladder)(H_{2i-1})(S_{2i-1}H_{2i-1})(\CZladder)(D_{2i-1})(S^{\dagger}_{2i-1}) \qquad (k = 2i-1)\\
D_{2i+1} &= (S_{2i+1})(D_{2i})(\CZladder)(H_{2i}S^{\dagger}_{2i})(H_{2i})(\CZladder)(S_{2i+1})(H_{2i+1}) \qquad\qquad\quad (k = 2i)
\end{align*}
Figure \ref{fig:Dk+1-circuit} give an explicit circuit for $D_{k+1}$.  In both cases, a synthesis of $D_{k+1}$ uses four single-qubit gates, two $\CZladder$ and a $D_{k}$. 

Let $\# D_k$ denote the depth of the Ry-Rz-CZ circuit that represents the $D_k$ circuit. By construction, we obtain the relation
\begin{align}
\#D_{k} &= 2(5+2+1)-2+\#D_{k-1}+4\times 0 + 2\times 1=16+\#D_{k-1}
\end{align}
With the initial conditions $\#D_{1} = 16$, one obtains
\begin{align}
\#D_{k} =16(k-1)+\# D_1=16k
\end{align}
Hence $D_{k}$ can always be implemented by an Ry-Rz-CZ circuit of depth $16k$.

Because every primitive gate employed so far is unitary and has its Hermitian conjugate available at identical cost~\footnote{explicitly,
$H^\dagger = H,\ (SH)^\dagger = HS^{\dagger}
,\ (HS^{\dagger})^\dagger = SH,$ and $\CZladder^\dagger =  \CZladder$} the synthesis of $D_k^\dagger$ follows immediately by time-reversing each operation such as $H$ or $\CZladder$ for $D_k$.
$D_k^\dagger$ can thus be implemented with exactly the same cost as $D_k$, i.e., $\#D_k^\dagger =\# D_k$.

\subsection{Nearest-neighbor CNOT gates in the Ry-CNOT ansatz}\label{subsec:adj-CNOT-in-Ry-CNOT}
\begin{figure}[tb]
\begin{quantikz}
\lstick{$\ket{q_6}$}   &\gate{Ry^{(16)}}    &\ctrl{1} &  & &  &\ctrl{1}      &   &  &\ctrl{1}     & &\ctrl{1}  & & \\
\lstick{$\ket{q_5}$}   &\gate{Ry^{(15)}}    &\targ{}  &\ctrl{1}  &\gate{Ry^{(10)}}    &\ctrl{1}  &\targ{}   &\gate{Ry^{(6)}}    &\ctrl{1} &\targ{}   & &\targ{} &\ctrl{1} &  \\
\lstick{$\ket{q_4}$}   &\gate{Ry^{(14)}}    &\ctrl{1} & \targ{}  &\gate{Ry^{(9)}}  &\targ{}   &\ctrl{1}  &  \gate{Ry^{(5)}}   & \targ{}   &\ctrl{1}   &\gate{Ry^{(2)}} &\ctrl{1} & \targ{}   &  \\
\lstick{$\ket{q_3}$}   &\gate{Ry^{(13)}}    &\targ{}  &\ctrl{1}  &\gate{Ry^{(8)}}    &\ctrl{1}  &\targ{}   &\gate{Ry^{(4)}}     &\ctrl{1} &\targ{}   &\gate{Ry^{(1)}}    &\targ{} &\ctrl{1} &  \\
\lstick{$\ket{q_2}$}   &\gate{Ry^{(12)}}    &\ctrl{1} &\targ{}   &\gate{Ry^{(7)}}    &\targ{}   &\ctrl{1}  &     \gate{Ry^{(3)}}      &\targ{}   &\ctrl{1}     & &\ctrl{1} &\targ{}   & \\
\lstick{$\ket{q_1}$}   &\gate{Ry^{(11)}}    &\targ{}  &  & & &\targ{}   &  & &\targ{}   & &\targ{} & &  
\end{quantikz}
    \caption{The quantum circuit structure for each layer $m$ of Eq.~(\ref{eq:cnot21}).}
    \label{fig:CNOT_universal_Ry1}
\end{figure}
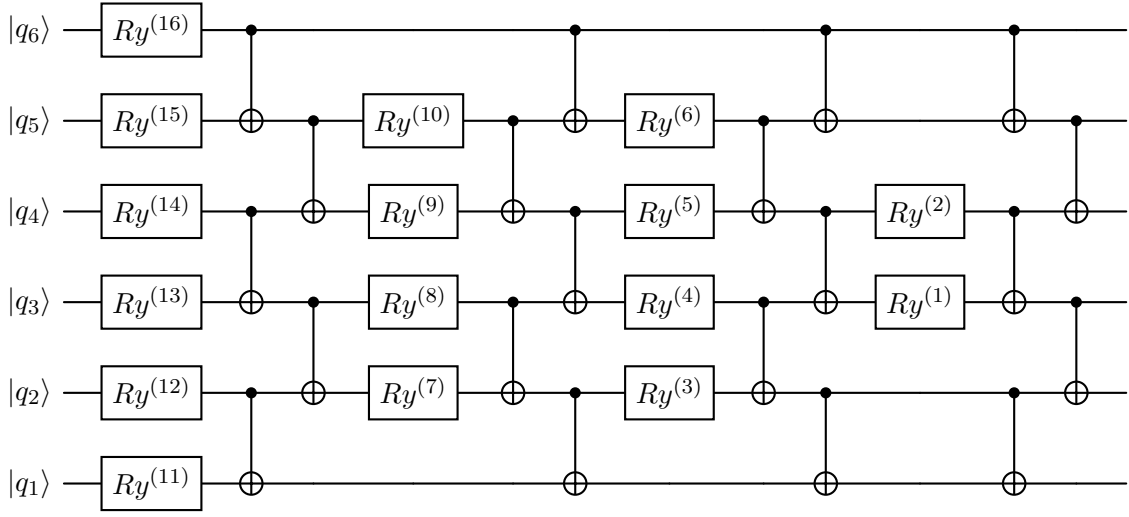

\begin{figure}[tb]
    \makebox[\textwidth]{
\begin{quantikz}
\lstick{$\ket{q_N}$}   \ldots \      &\gate{Ry(0)} &\ctrl{1} &     &\gate{Ry(0)}     &     &\ctrl{1}      &\gate{Ry(0)}      &     &\ctrl{1} &     \gate{Ry(0)}      &\ctrl{1}     &&\gate{Ry(0)} &     \ \cdots \\
\lstick{$\ket{q_{N-1}}$}  \ldots \     &\gate{Ry(0)} &\targ{}  &\ctrl{1}  &\gate{Ry(0)} &\ctrl{1}  &\targ{}   &\gate{Ry(0)}   &\ctrl{1} &\targ{}   &\gate{Ry(0)} &\targ{} &\ctrl{1} &\gate{Ry(0)} & \ \cdots \\
\lstick{$\vdots$}  \setwiretype{n} &&&\vdots &&\vdots&&& \vdots &&&& \vdots  \\
\lstick{$\ket{q_{N-2i}}$}  \ldots \    &\gate{Ry}    &\ctrl{1} &\targ{}   &\gate{Ry(0)} &\targ{}   &\ctrl{1}      &\gate{Ry(0)}    &\targ{}   &\ctrl{1}     &\gate{Ry(0)} &\ctrl{1} &\targ{}   &\gate{Ry}  & \ \cdots \\
\lstick{$\ket{q_{N-2i-1}}$} \ldots \   &\gate{Ry}    &\targ{}  &\ctrl{1}  &\gate{Ry}    &\ctrl{1}  &\targ{}   &\gate{Ry}      &\ctrl{1} &\targ{}   &\gate{Ry(0)} &\targ{} &\ctrl{1} &\gate{Ry}  & \ \cdots \\
\lstick{$\ket{q_{N-2i-2}}$}  \ldots \  &\gate{Ry}    &\ctrl{1} &\targ{}   &\gate{Ry}    &\targ{}   &\ctrl{1}  &     \gate{Ry}     &\targ{}   &\ctrl{1}     &\gate{Ry}    &\ctrl{1} &\targ{}   &\gate{Ry} & \ \cdots \\
\lstick{$\ket{q_{N-2i-3}}$} \ldots \   &\gate{Ry}    &\targ{}  &\ctrl{1}  &\gate{Ry}    &\ctrl{1}  &\targ{}   &\gate{Ry}     &\ctrl{1} &\targ{}   &\gate{Ry}    &\targ{} &\ctrl{1} &\gate{Ry} & \ \cdots \\
\lstick{$\ket{q_{N-2i-4}}$}  \ldots \  &\gate{Ry}    &\ctrl{1} &\targ{}   &\gate{Ry}    &\targ{}   &\ctrl{1}  &     \gate{Ry}      &\targ{}   &\ctrl{1}     &\gate{Ry(0)} &\ctrl{1} &\targ{}   &\gate{Ry} & \ \cdots \\
\lstick{$\ket{q_{N-2i-5}}$} \ldots \   &\gate{Ry}    &\targ{}  &\ctrl{1}  &\gate{Ry(0)} &\ctrl{1}  &\targ{}   &\gate{Ry(0)}     &\ctrl{1} &\targ{}   &\gate{Ry(0)} &\targ{} &\ctrl{1} &\gate{Ry} & \ \cdots \\
\setwiretype{n} \vdots &&&\vdots & & \vdots &&& \vdots &&&&\vdots
\end{quantikz}
}
=
\begin{quantikz}
\lstick{$\ket{q_N}$}   \ldots \      & & &     &     &     &     &      &     & &           &    && &     \ \cdots \\
\lstick{$\ket{q_{N-1}}$}  \ldots \     & &&  & &&&   & &   & & & & & \ \cdots \\
\lstick{$\vdots$}  \setwiretype{n} &&& &&&&&  &&&&   \\
\lstick{$\ket{q_{N-2i}}$}  \ldots \    &\gate{Ry}    &\ctrl{1} &  & &  &\ctrl{1}      &   &  &\ctrl{1}     & &\ctrl{1} & &\gate{Ry}  & \ \cdots \\
\lstick{$\ket{q_{N-2i-1}}$} \ldots \   &\gate{Ry}    &\targ{}  &\ctrl{1}  &\gate{Ry}    &\ctrl{1}  &\targ{}   &\gate{Ry}      &\ctrl{1} &\targ{}   & &\targ{} &\ctrl{1} &\gate{Ry}  & \ \cdots \\
\lstick{$\ket{q_{N-2i-2}}$}  \ldots \  &\gate{Ry}    &\ctrl{1} &  \targ{}  &\gate{Ry}    &\targ{}    &\ctrl{1}  &     \gate{Ry}     & \targ{}   &\ctrl{1}     &\gate{Ry}    &\ctrl{1} & \targ{}   &\gate{Ry} & \ \cdots \\
\lstick{$\ket{q_{N-2i-3}}$} \ldots \   &\gate{Ry}    &\targ{}  &\ctrl{1}  &\gate{Ry}    &\ctrl{1}  &\targ{}   &\gate{Ry}     &\ctrl{1} &\targ{}   &\gate{Ry}    &\targ{} &\ctrl{1} &\gate{Ry} & \ \cdots \\
\lstick{$\ket{q_{N-2i-4}}$}  \ldots \  &\gate{Ry}    &\ctrl{1} &\targ{}   &\gate{Ry}    &\targ{}   &\ctrl{1}  &     \gate{Ry}      &\targ{}   &\ctrl{1}     & &\ctrl{1} &\targ{}   &\gate{Ry} & \ \cdots \\
\lstick{$\ket{q_{N-2i-5}}$} \ldots \   &\gate{Ry}    &\targ{}  &  & & &\targ{}   &    & &\targ{}   & &\targ{} & &\gate{Ry} & \ \cdots \\
\setwiretype{n} \vdots &&& & & &&& \vdots &&&&
\end{quantikz}
    \caption{By setting some of the parameters of Ry gates to zero as shown in the circuit above, the circuit of 6-qubit circuit like Eq.~(\ref{eq:cnot21}) and Fig.~\ref{fig:CNOT_universal_Ry1} can be constructed between ($N-2i$, $N-2i-5$)-qubits.}
    \label{fig:nn_cnot_create}
\end{figure}
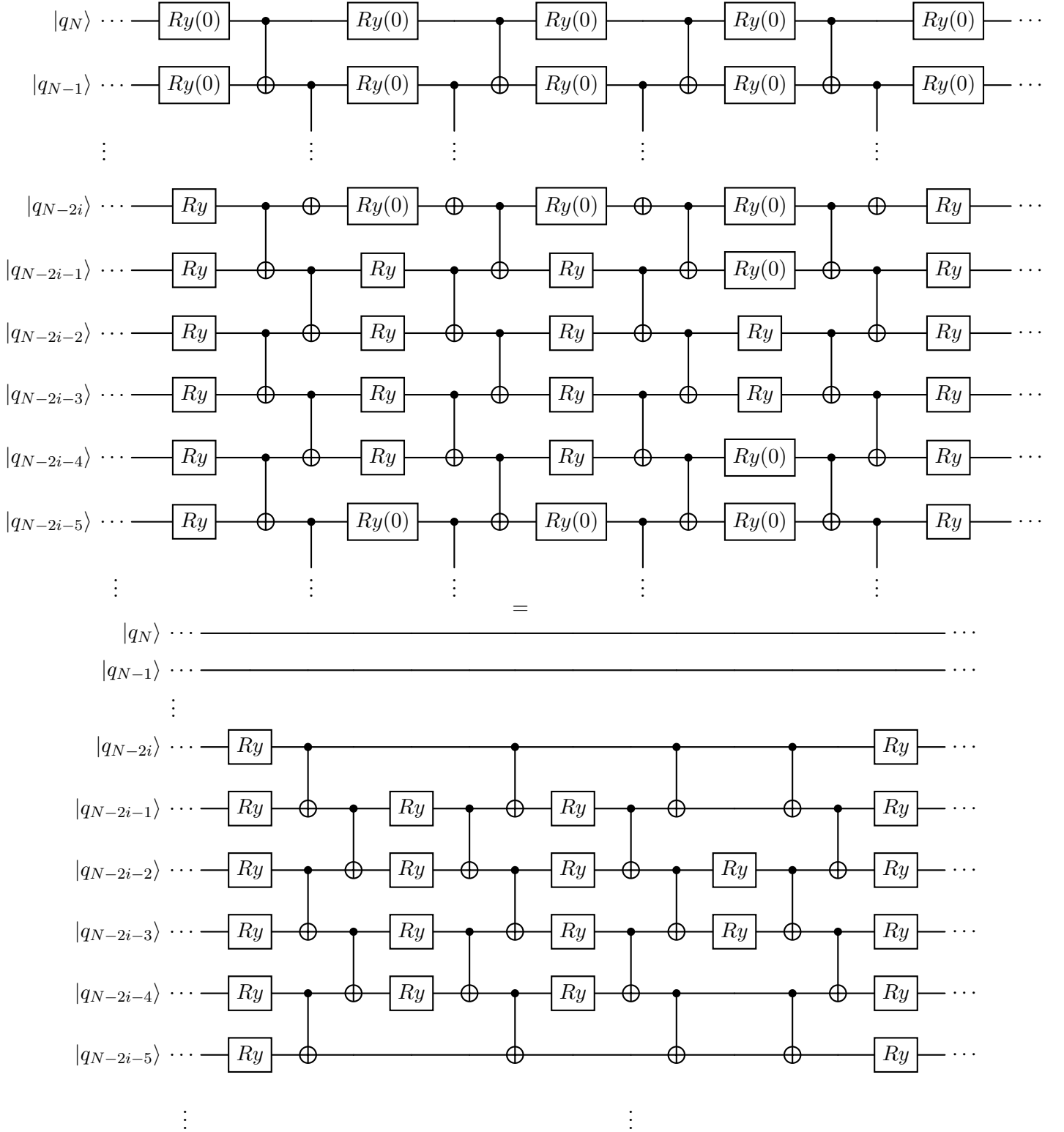
For a 6-qubit Ry-CNOT circuit, a nearest-neighbor CNOT gate $\mathrm{CNOT}_{2,1}$ can be implemented by choosing the rotation angles of the single-qubit gates $R_y^{(k)}$ in the following way:
\begin{align}
\label{eq:cnot21}
\mathrm{CNOT}_{2,1} = (\prod_{i=1}^6 R_y^{(i)}(\theta_{0,i}))\prod_{m=1}^{8} & \Bigl( \CNOTladder \cdot (\prod_{i=1}^{2} R_y^{(i+2)}(\theta_{m,i}))\cdot\CNOTladder^{\dagger} \cdot (\prod_{i=1}^4 R_y^{(i+1)}(\theta_{m,i+2})) \nonumber\\
&\cdot  \CNOTladder^{\dagger} \cdot (\prod_{i=1}^4 R_y^{(i+1)}(\theta_{m,i+6}))\cdot \CNOTladder \cdot (\prod_{i=1}^6 R_y^{(i)}(\theta_{m,i+10})) \Bigr),
\end{align}
where $R_y^{(i)}$ denotes the $R_y$ gate acting on the $i$-th qubit.
The angle values are given in Table~\ref{tab:cnot21-ry-cnot-angles} and the circuit structure for each $m$ is shown in Fig.~\ref{fig:CNOT_universal_Ry1}.
\newcommand{\myfrac}[2]{#1/#2}
\begin{table}[h]
\caption{Explicit values of $\theta_{m,i}$ in Eq.~(\ref{eq:cnot21}) in the unit of $\pi$}
\setlength{\tabcolsep}{12pt}
\centering
\begin{tabular}{cccccccccc}
$i \setminus m$ & 0 & 1 & 2 & 3 & 4 & 5 & 6 & 7 & 8 \\
\hline
1 & 1 & $\myfrac{1}{2}$ & 0 & 0 & $\myfrac{3}{4}$ & 1 & 0 & $\myfrac{1}{2}$ & $\myfrac{1}{2}$ \\
2 & $\myfrac{1}{2}$ & 0 & $\myfrac{1}{2}$ & $\myfrac{1}{4}$ & $\myfrac{3}{4}$ & $\myfrac{1}{2}$ & $\myfrac{1}{2}$ & 0 & $\myfrac{1}{4}$ \\
3 & 0 & 0 & $\myfrac{1}{4}$ & $\myfrac{1}{2}$ & $\myfrac{1}{4}$ & $\myfrac{1}{2}$ & 0 & $\myfrac{1}{2}$ & 0 \\
4 & 1 & $\myfrac{1}{2}$ & $\myfrac{1}{2}$ & $\myfrac{1}{2}$ & $\myfrac{1}{2}$ & $\myfrac{1}{2}$ & 0 & 0 & $\myfrac{1}{2}$ \\
5 & $\myfrac{1}{4}$ & $\myfrac{1}{2}$ & $\myfrac{1}{2}$ & 0 & $\myfrac{1}{2}$ & 0 & $\myfrac{1}{4}$ & $\myfrac{1}{2}$ & 0 \\
6 & $\myfrac{3}{4}$ & $\myfrac{1}{2}$ & $\myfrac{1}{2}$ & $\myfrac{1}{2}$ & $\myfrac{1}{2}$ & 1 & $\myfrac{1}{2}$ & $\myfrac{1}{2}$ & $\myfrac{1}{4}$ \\
7 &  & $\myfrac{1}{2}$ & $\myfrac{1}{2}$ & $\myfrac{1}{2}$ & $\myfrac{1}{2}$ & 0 & $\myfrac{1}{2}$ & $\myfrac{1}{2}$ & 1 \\
8 &  & $\myfrac{1}{2}$ & 0 & 0 & 0 & 0 & $\myfrac{1}{2}$ & 0 & $\myfrac{1}{4}$ \\
9 &  & 0 & $\myfrac{1}{2}$ & $\myfrac{1}{2}$ & $\myfrac{1}{2}$ & $\myfrac{1}{2}$ & 0 & $\myfrac{1}{2}$ & 0 \\
10 &  & 0 & $\myfrac{7}{2}$ & 0 & 0 & 0 & 0 & 0 & $\myfrac{1}{2}$ \\
11 &  & $\myfrac{1}{2}$ & $\myfrac{3}{4}$ & $\myfrac{1}{2}$ & 1 & 1 & 1 & 1 & 0 \\
12 &  & $\myfrac{1}{2}$ & $\myfrac{3}{4}$ & $\myfrac{1}{2}$ & $\myfrac{1}{4}$ & $\myfrac{1}{2}$ & $\myfrac{1}{2}$ & 0 & $\myfrac{1}{2}$ \\
13 &  & $\myfrac{1}{2}$ & $\myfrac{1}{2}$ & $\myfrac{1}{2}$ & $\myfrac{1}{2}$ & 0 & 0 & $\myfrac{1}{2}$ & 0 \\
14 &  & 0 & 0 & 1 & $\myfrac{1}{4}$ & 0 & $\myfrac{1}{2}$ & $\myfrac{1}{2}$ & 0 \\
15 &  & $\myfrac{1}{2}$ & $\myfrac{1}{2}$ & $\myfrac{1}{2}$ & $\myfrac{1}{2}$ & 0 & 0 & 0 & $\myfrac{1}{2}$ \\
16 &  & 1 & 1 & 1 & 0 & 0 & 1 & 1 & $\myfrac{1}{4}$ \\
\end{tabular}
\label{tab:cnot21-ry-cnot-angles}
\end{table}

Noticing that $\CNOTladder$ and $\CNOTladder^\dagger$ are always used in pairs, it is possible to embed the above 6-qubit circuits into an Ry-CNOT circuit of any qubits $N$ with $N\ge 6$ simply by choosing suitable angles, because successive CNOTs acting on the same qubits cancel each other (Fig. \ref{fig:nn_cnot_create}).

More precisely, the gate $\mathrm{CNOT}_{k,k-1}$ is obtained by embedding Eq.~(\ref{eq:cnot21}) to $q_{k-1}, q_k, \dots, q_{k+4}$. Note that, for $k>N-4$, where $N$ is the number of qubits of the target circuit, at most four additional qubits is required to accommodate the 6-qubit circuit. For example, to implement $\CNOT_{N,N-1}$, the ansatz circuit should be embedded in $(q_{N-1},\dots, q_{N+4})$. 

Let us count the depth to realize the nearest-neighbor CNOT gates. 
Eq.~(\ref{eq:cnot21}) uses 16 $\CNOTladder$ (depth-1) and 16 $\CNOTladder^{\dagger}$ (depth-$2^{\lceil\log_{2}N\rceil}-1$ for $N$-qubit; see Appendix~\ref{section:proof-order-of-CNOT-ladder}).

Moreover, because the compound gates $HX$ and $XH$ can be realised within a zero-depth layer, they enable us to reverse the control–target direction of a CNOT with no extra cost. We conclude that any nearest-neighbor CNOT gate for an $N$-qubit target circuit can be realized by a $16\times 2^{\lceil\log_{2}(N+4)\rceil}$-depth, $N+4$-qubit Ry-CNOT circuit.

\subsection{Proof of Proposition~\ref{prop:order-of-CNOT-ladder}}\label{section:proof-order-of-CNOT-ladder}
\begin{prop*}[\ref{prop:order-of-CNOT-ladder}]
For every integer $N\ge 2$,
$$
   (\CNOTladder)^{\,2^{\lceil\log_{2}N\rceil}} \;=\; E^{(N)},
$$
where $E^{(N)}$ is the $N$-qubit identity.
\end{prop*}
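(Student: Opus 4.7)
The plan is to reduce the unitary identity to an identity of classical $\mathbb{F}_2$-linear maps. Since every $\CNOT$ acts on computational basis states as an $\mathbb{F}_2$-linear bijection, the whole ladder induces a linear map $L\colon \mathbb{F}_2^N \to \mathbb{F}_2^N$, and proving $(\CNOTladderN{N})^{2^{\lceil\log_2 N\rceil}} = E^{(N)}$ amounts to proving $L^{2^{\lceil\log_2 N\rceil}} = I$ in $GL_N(\mathbb{F}_2)$.

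The key observation I would establish first is that $L$ is \emph{unipotent}. Inspecting Fig.~\ref{fig:CNOT_ladder} for both even and odd $N$ shows that every $\CNOT$ in $\CNOTladderN{N}$ has the form $\CNOT_{k,k-1}$, so its control qubit has strictly higher index than its target. Writing states in the column-vector convention $(b_N, b_{N-1}, \dots, b_1)^T$, such a $\CNOT_{k,k-1}$ acts as the elementary transvection $I + E_{N-k+2,\,N-k+1}$, which is strictly sub-diagonal. A product of strictly sub-diagonal transvections is lower triangular with ones on the diagonal, so $L - I$ is strictly lower triangular and hence nilpotent with $(L - I)^N = 0$.

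To conclude, I would invoke the characteristic-$2$ identity from Lucas' theorem: $\binom{2^k}{j} \equiv 0 \pmod 2$ for every $0 < j < 2^k$, so the binomial expansion collapses to
$$L^{2^k} \;=\; \bigl(I + (L - I)\bigr)^{2^k} \;=\; I + (L - I)^{2^k} \qquad \text{over } \mathbb{F}_2.$$
Setting $k = \lceil\log_2 N\rceil$ gives $2^k \ge N$, and combined with the nilpotency bound above yields $(L - I)^{2^k} = 0$, hence $L^{2^k} = I$, which proves the proposition.

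The only nontrivial step is verifying the unipotent structure of $L$, which reduces to checking that every $\CNOT$ in the ladder points ``downward'' in qubit index; this is immediate from the explicit diagrams. The remainder is a one-line application of Lucas' theorem to a unipotent matrix, so I do not anticipate further obstacles.
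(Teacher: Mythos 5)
Your proof is correct, and it takes a genuinely different and considerably more economical route than the paper's. The paper proceeds by induction on $N$: it writes $\CNOTladderN{N+1}$ as a block operator controlled on the $(N{+}1)$-th qubit, reduces the inductive step to the identity $\bigl(X_N\,\CNOTladderN{N}\bigr)^{2^{\lceil\log_2(N+1)\rceil}}=E^{(N)}$, and establishes that identity by propagating the $X$ gates through the ladder, counting lattice paths with the hockey-stick identity, and invoking Lucas' theorem to show the resulting binomial coefficient $\binom{2^{\lceil\log_2(N+1)\rceil}+\lfloor j/2\rfloor}{j}$ is even for every $j$. You instead exploit the fact that a pure-CNOT circuit is faithfully represented by its induced element of $GL_N(\mathbb{F}_2)$, observe from the diagrams that every CNOT in the ladder has control index strictly greater than target index so that $L$ is unitriangular, and then apply the characteristic-$2$ binomial collapse $\bigl(I+(L-I)\bigr)^{2^k}=I+(L-I)^{2^k}$ together with $(L-I)^N=0$ and $2^{\lceil\log_2 N\rceil}\ge N$. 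Both arguments ultimately rest on the same arithmetic fact ($\binom{2^k}{j}\equiv 0\pmod 2$ for $0<j<2^k$), but yours packages it once at the level of the whole matrix rather than qubit-by-qubit, avoids the parity bookkeeping in the inductive decomposition (where the ladder must be conjugated/daggered depending on whether $N$ is even or odd), and actually proves a stronger statement for free: the order of \emph{any} $N$-qubit circuit built solely from ``downward-pointing'' CNOTs divides $2^{\lceil\log_2 N\rceil}$. The paper's computation does yield finer information---the exact parity of the number of $X$'s landing on each qubit, which is what its inductive step needs---but for the proposition as stated your unipotency argument is complete and self-contained.
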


\begin{proof}
We prove the claim by mathematical induction on the number of qubits $N$.
For the base case $N=2$ we have
\begin{align}
(\CNOTladderN{2})^{2}=(\CNOT)^{2}=E^{(2)},
\end{align}
so the statement holds.
Assume as the induction hypothesis that for some $N\ge 2$
\begin{align}
\bigl(\CNOTladderN{N}\bigr)^{2^{\lceil\log_{2}N\rceil}}=E^{(N)}.
\end{align}
To extend the result to $N+1$ qubits, add the CNOT gates acting between on $(N\!+\!1)$-th (control) and $N$-th (target) qubits between each adjacent pair of $N$-qubit ladders. Note that we need to invert the $\CNOTladderN{N}$ because the definition of the ladder depends on the parity of $N$.
\begin{align}
\CNOTladderN{N+1} &= \CNOTladderN{N}^{\dagger}\mathrm{CNOT}_{N+1,N} \nonumber\\
&= \CNOTladderN{N}^{\dagger}
(\ket{0}\bra{0}_{N+1} \otimes E^{(N)} + \ket{1}\bra{1}_{N+1} \otimes (E^{(N-1)} \otimes X_N) \nonumber\\
 &= \ket{0}\bra{0}_{N+1} \otimes \CNOTladderN{N}^{\dagger} + \ket{1}\bra{1}_{N+1} \otimes \CNOTladderN{N}^{\dagger}X_N
\end{align}
Raising this operator to the power $2^{\lceil\log_{2}(N+1)\rceil}$ yields
\begin{align}\label{eq:B4}
&(\CNOTladderN{N+1})^{2^{\lceil\log_{2}(N+1)\rceil}} \nonumber\\
&= (\ket{0}\bra{0}_{N} \otimes \CNOTladderN{N}^{\dagger} + \ket{1}\bra{1}_{N} \otimes \CNOTladderN{N}^{\dagger}X_N)^{2^{\lceil\log_{2}(N+1)\rceil}} \nonumber\\
&= \ket{0}\bra{0}_{N} \otimes (\CNOTladderN{N}^{\dagger})^{2^{\lceil\log_{2}(N+1)\rceil}} + \ket{1}\bra{1}_{N} \otimes (\CNOTladderN{N}^{\dagger}X_N)^{2^{\lceil\log_{2}(N+1)\rceil}}
\end{align}

From the fact that for some integer $l = 0,1$ one can write

\begin{align}
\lceil\log_{2}(N+1)\rceil \;=\; \lceil\log_{2}(N)\rceil + l,
\end{align}

Using the induction hypothesis $\bigl(\CNOTladderN{N}\bigr)^{2^{\lceil\log_{2}N\rceil}}=E_{N}$, we obtain the following equation,
\begin{align}
\bigl(\CNOTladderN{N}^{\dagger}\bigr)^{2^{\lceil\log_{2}(N+1)\rceil}}=\bigl(\CNOTladderN{N}^{\dagger}\bigr)^{2^{(\lceil\log_{2}(N)\rceil+l)}}=\bigl(\bigl(\CNOTladderN{N}^{\dagger}\bigr)^{2^{(\lceil\log_{2}(N)\rceil)}}\bigr)^{2^l}=E^{(N)},
\end{align}
Therefore, to show Eq.~(\ref{eq:B4}), it is sufficient to prove the following statement
\begin{align}\label{eq:X_Ladder_eq}
\bigl(X_{N}\,\CNOTladderN{N}\bigr)^{2^{\lceil\log_{2}(N+1)\rceil}}=E^{(N)},
\end{align}
for once this identity holds, both terms in Eq. (\ref{eq:B4})reduce to the identity and we obtain
\begin{align}
\bigl(\CNOTladderN{N+1}\bigr)^{2^{\lceil\log_{2}(N+1)\rceil}}=E^{(N+1)}.
\end{align}

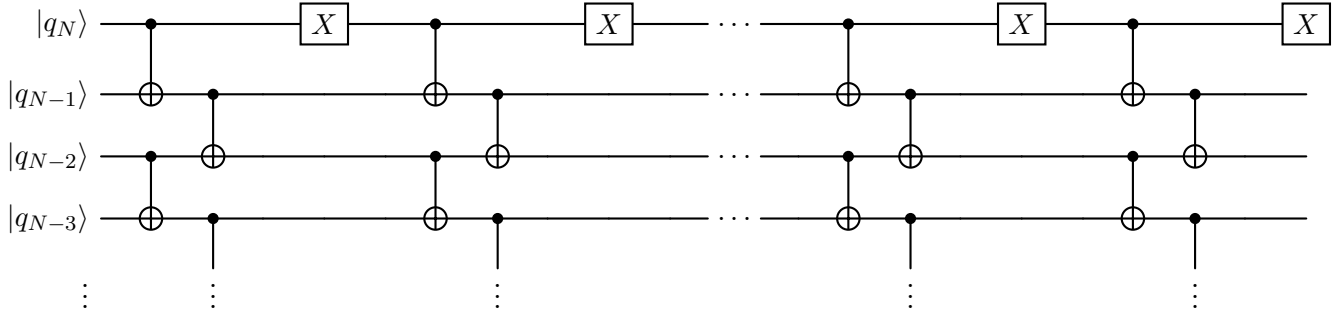
\begin{figure}[tb]
\begin{quantikz}
    \lstick{$\ket{q_{N}}$}    
    &  \ctrl{1} & & 
    & \gate{X}
    & & \ctrl{1} & & 
    & \gate{X} 
    & 
    & \ \dots \ 
    & & \ctrl{1} & & 
    & \gate{X}
    & & \ctrl{1} & & 
    & \gate{X}
    \\
    \lstick{$\ket{q_{N-1}}$} 
    & \targ{}& \ctrl{1} & 
    & 
    & & \targ{}& \ctrl{1} & 
    & 
    & 
    & \ \dots \ 
    & & \targ{}& \ctrl{1} & 
    & 
    & & \targ{}& \ctrl{1} & 
    & 
    \\
    \lstick{$\ket{q_{N-2}}$} 
    & \ctrl{1} & \targ{}& 
    & 
    & & \ctrl{1} & \targ{}& 
    & 
    & 
    & \ \dots \ 
    & & \ctrl{1} & \targ{}& 
    & 
    & & \ctrl{1} & \targ{}& 
    & 
    \\
    \lstick{$\ket{q_{N-3}}$} 
    & \targ{}& \ctrl{1} & 
    & 
    & & \targ{}& \ctrl{1} & 
    &     
    & 
    & \ \dots \ 
    & & \targ{}& \ctrl{1} & 
    & 
    & & \targ{}& \ctrl{1} & 
    & 
    \\
    \lstick{\vdots} \setwiretype{n}
    & & \vdots & 
    &
    & & & \vdots & 
    & 
    & &
    & & & \vdots & 
    &
    & & & \vdots & 
\end{quantikz}
    \caption{$N$-qubit, $2^{k+1}$-depth circuit.}
    \label{fig:Cdag_X}
\end{figure}

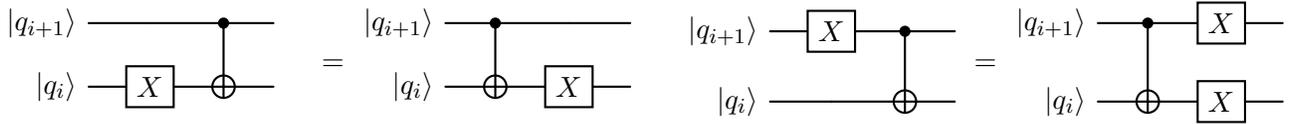
\begin{figure}[tb]
    \centering
    \makebox[\textwidth]{
\begin{quantikz}
    \lstick{$\ket{q_{i+1}}$} & & \ctrl{1}   &  \\
    \lstick{$\ket{q_{i}}$} & \gate{X} & \targ & &
    \end{quantikz}
    \quad
    $=$
    \begin{quantikz}
    \lstick{$\ket{q_{i+1}}$} & \ctrl{1} &   &  \\
    \lstick{$\ket{q_{i}}$} & \targ{} & \gate{X} & 
    \end{quantikz}
    \quad
    \begin{quantikz}
    \lstick{$\ket{q_{i+1}}$} & \gate{X} & \ctrl{1}   &  \\
    \lstick{$\ket{q_{i}}$} & & \targ{} &
    \end{quantikz}
    $=$
    \begin{quantikz}
    \lstick{$\ket{q_{i+1}}$} & \ctrl{1} &  \gate{X}  &  \\
    \lstick{$\ket{q_{i}}$} & \targ{} & \gate{X} & 
\end{quantikz}
}
\caption{$X$ propagation}
\label{fig:X_propagation}
\end{figure}

We analyse the operator $\bigl(X_{N}\CNOTladderN{N}\bigr)^{k}$ in Fig.~\ref{fig:Cdag_X}.
Starting from this figure, we propagate each $X_N$ gate to the right of all $\CNOTladderN{N}$ gates according to the rules shown in Fig.~\ref{fig:X_propagation}.
At the end, we will see that the $X$ count at each qubit is always even and cancels out for the specific $k$ of our interest, concluding the proof of Eq.~(\ref{eq:X_Ladder_eq}).

\begin{figure}[tb]
    \includegraphics[width=\textwidth]{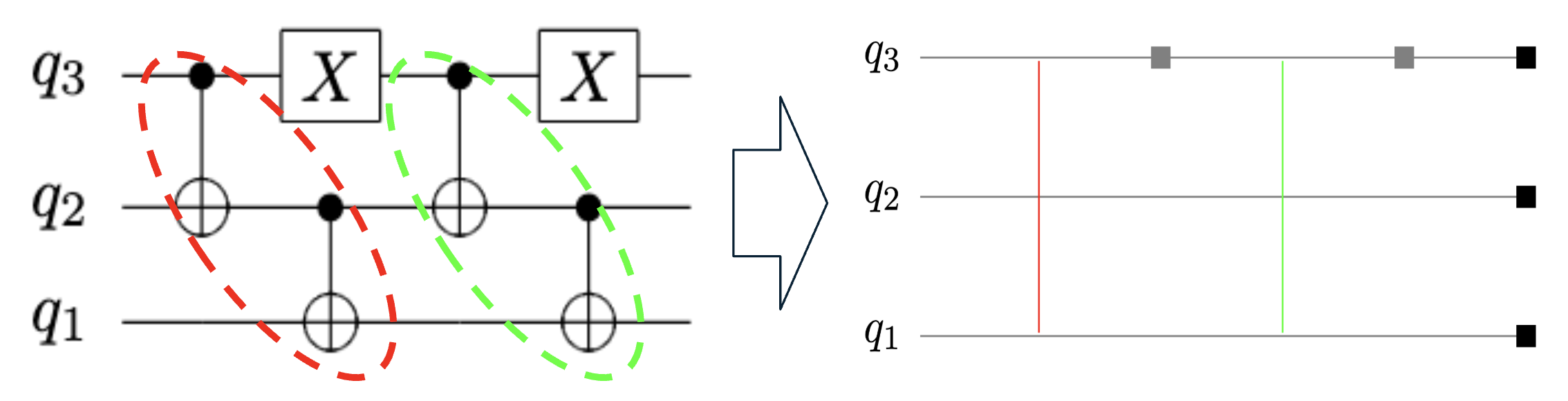}
    \caption{An example of converting the propagation problem into the shortest path problem.}
    \label{fig:X_propagation_2_2}
\end{figure}
\begin{figure}[tb]
    \includegraphics[width=\textwidth]{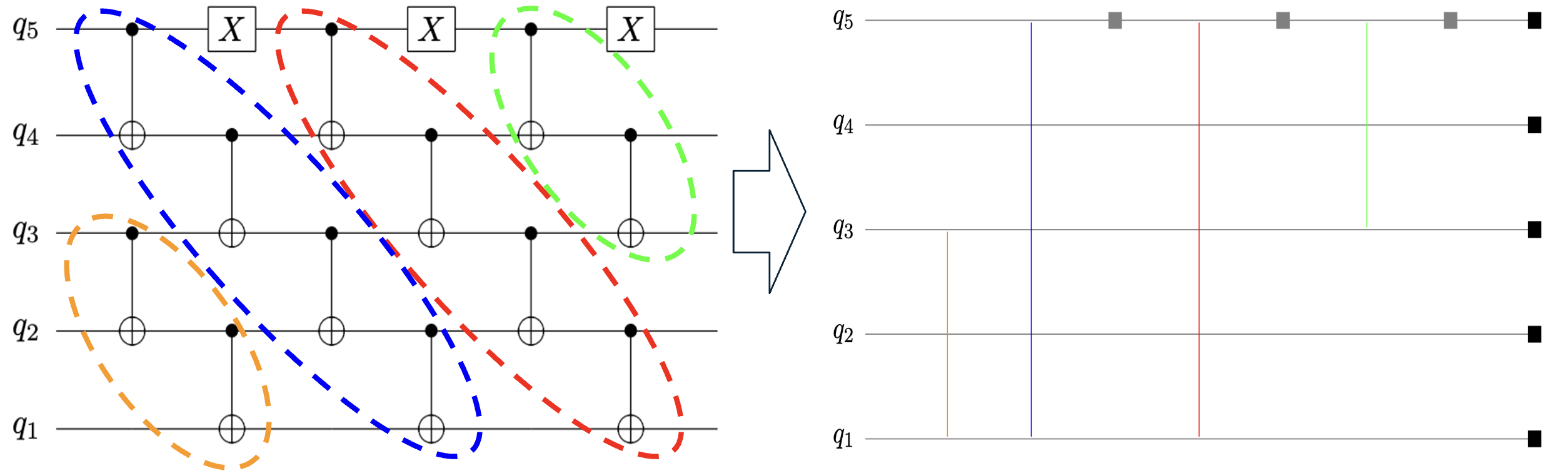}
    \caption{Another example of converting the propagation problem into the shortest path problem.}
    \label{fig:X_propagation_5_3}
\end{figure}

Let us first explain the propagation works in the concrete instance $N=3,\,k=2$ shown in Fig.~\ref{fig:X_propagation_2_2}.
The left panel gives the circuit diagram, the right panel its lattice diagram. In the lattice diagram on the right, the gray squares denote the $X$ gates in the circuit diagram on the left, and the black squares mark the positions those $X$ gates occupy after propagation.  The red and green vertical lines highlight the CNOTs enclosed by red and green boxes, respectively, in the left-hand circuit diagram.  In this picture, the total number of propagated $X$ gates at each black square turns out to be equal to the number of shortest paths (moving only rightward or downward) from any gray square to the black square. From the rightmost gray square to the black square at $q_3$ there is exactly one such path, and similarly one path from the left gray square. Only the left gray square can reach the black squares at $q_1$ and $q_2$, and in each case there is exactly one path passing through the green line. CNOTs represented by the red vertical line lie strictly to the left of every admissible path and therefore do not influence the count. Larger circuits, e.g. Fig~\ref{fig:X_propagation_5_3}, behave in the same way. In those cases, the X count on the right is not necessarily even.

For the general case we label the initial $X$ gates from the right, setting $i=0$ for the rightmost one.  Let $q_j\quad (j=1,\dots ,N)$ denotes the destination qubit.
A path from $i$-th $X$ to $q_j$ exists if and only if the $i$-th $X$ is not farther right than the first CNOT layer that can propagate it to $q_j$, i.e.
\begin{align}
i\;\ge\;\Bigl\lfloor\tfrac{j-1}{2}\Bigr\rfloor.
\end{align}

For $i$ and $j$ that satisfies the above relation, the lattice that we count the paths has height $j-1$ (the downward steps necessary to reach $q_j$) and width
$i-\bigl\lfloor\frac{j-1}{2}\bigr\rfloor$ (number of ``ladders'' available to reach $q_j$).
The number of paths from $i$-th $X$ gates on qubit $j$ is 
\begin{align}
\binom{i-\bigl\lfloor\frac{j-1}{2}\bigr\rfloor+j-1}{j-1} = 
\binom{i+\bigl\lfloor\frac{j}{2}\bigr\rfloor}{j-1},
\end{align}
and it is zero when the inequality fails. Using the well-known \emph{hockey-stick} identity,  summing of k $X$ gates from each layer gives
\begin{align}
\operatorname{count}_{X}(k,j)=\sum_{i=0}^{k-1}\binom{i+\bigl\lfloor\frac{j}{2}\bigr\rfloor}{j-1}
=\binom{k+\bigl\lfloor\frac{j}{2}\bigr\rfloor}{\,j},
\end{align}
which gives the exact number of $X$ gates landing on qubit $q_j$ for arbitrary $N$ and $k$.

Choose the exponent $
k = 2^{\lceil\log_{2}(N+1)\rceil}$,
\begin{align}
\operatorname{count}_{X}(2^{\lceil\log_{2}(N+1)\rceil},j)=
\binom{%
      2^{\lceil\log_{2}(N+1)\rceil}+\bigl\lfloor\frac{j}{2}\bigr\rfloor}
      {\,j}.
\end{align}
To prove Eq.~(\ref{eq:X_Ladder_eq}), it suffices to show that
$\operatorname{count}_{X}\bigl(2^{\lceil\log_{2}(N+1)\rceil},j\bigr)$
is even for all $j$.  For this purpose, we invoke Lucas' theorem.
In binary form it claims the following:
For
\begin{align}
A&=\sum a_{s}2^{s}, &
B&=\sum b_{s}2^{s} & (a_{s},b_{s}\in\{0,1\}),
\end{align}
the following equivalence holds:
\begin{align}
\binom{A}{B}\equiv1\pmod{2}
\Longleftrightarrow
b_{s}\le a_{s}\text{ for every }s.
\end{align}

Examine the two arguments $A, B$ of the binomial coefficient in our case.
\begin{align}
A=2^{\lceil\log_{2}(N+1)\rceil}+\Bigl\lfloor\frac{j}{2}\Bigr\rfloor.
\end{align}
The binary representation for \(2^{\lceil\log_{2}(N+1)\rceil}\) is a single $1$ in position $\lceil\log_{2}(N+1)\rceil$ followed by zeros.

Since
\begin{align}
B=j \;<\; 2^{\lceil\log_{2}(N+1)\rceil}
\quad\text{and}\quad
\lfloor\tfrac{j}{2}\rfloor  \;\leq\;\tfrac{j}{2}=B/2,
\end{align}
it follows that the largest index $s$ for which $b_s = 1$ must have $a_s = 0$, because $s$ is smaller than $\lceil\log_{2}(N+1)\rceil$, which is the only bit that \(2^{\lceil\log_{2}(N+1)\rceil}\) contributes, while $s$ is shown, in the latter part of the inequalities, to be larger than the top bit that $\lfloor\tfrac{j}{2}\rfloor$ in $A$ contributes.
We conclude that
\begin{align}
\operatorname{count}_{X}\bigl(2^{\lceil\log_{2}(N+1)\rceil},j\bigr)
      \equiv 0\pmod{2}\quad
      (j=1,\dots ,N).
\end{align}

Thus after all propagated \(X\) gates of the circuit $\bigl(X_{N}\CNOTladderN{N}\bigr)^{\,2^{\lceil\log_{2}(N+1)\rceil}}$ occur in even pairs and cancel. 
\begin{align}
\bigl(X_N\CNOTladderN{N}\bigr)^{\,2^{\lceil\log_{2}(N+1)\rceil}}
= \bigl(\CNOTladderN{N}\bigr)^{\,2^{\lceil\log_{2}(N+1)\rceil}}
=E^{(N)}.
\end{align}
Therefore, following equation also holds,
\begin{align}
\bigl(\CNOTladderN{N+1}\bigr)^{\,2^{\lceil\log_{2}(N+1)\rceil}}
       =E^{(N+1)}.
\end{align}

Under the induction hypothesis for $N$, we have shown that the statement also holds for $N+1$.  Therefore, by mathematical induction, Proposition~\ref{prop:order-of-CNOT-ladder} is proved.

\end{proof}

\end{document}